\documentclass[pdflatex,sn-mathphys-num]{sn-jnl}

\usepackage{graphicx}%
\usepackage{multirow}%
\usepackage{amsmath,amssymb,amsfonts}%
\usepackage{amsthm}%
\usepackage{mathrsfs}%
\usepackage[title]{appendix}%
\usepackage{xcolor}%
\usepackage{textcomp}%
\usepackage{manyfoot}%
\usepackage{booktabs}%
\usepackage{algorithm}%
\usepackage{algorithmicx}%
\usepackage{algpseudocode}%
\usepackage{listings}%
\usepackage{makecell, pdflscape}
\usepackage{tabularx}
\usepackage{comment}
\usepackage[dvipsnames]{xcolor}
\usepackage{trimclip,lipsum}
\usepackage{epsfig,mathtools}
\usepackage{caption, subcaption}
\usepackage[table]{xcolor}

\theoremstyle{thmstyleone}%
\newtheorem{theorem}{Theorem}
\theoremstyle{thmstyletwo}%

\theoremstyle{thmstylethree}%

\begin{document}

\title[]{The Role of Tick Questing Behavior and Host-Finding Success in \textit{Borrelia burgdorferi} Transmission in the Eastern United States}


\author*[1]{\fnm{Carli} \sur{Peterson}}\email{cxp4986@mavs.uta.edu}

\author[1]{\fnm{Erika} \sur{Gallo}}\email{erika.gallo@uta.edu}

\author[1,2]{\fnm{Christopher} \sur{Kribs}}\email{kribs@uta.edu}

\affil*[1]{\orgdiv{Department of Mathematics}, \orgname{The University of Texas at Arlington}, \orgaddress{\street{411 S Nedderman Dr}, \city{Arlington}, \postcode{76019}, \state{TX}, \country{USA}}}

\affil[2]{\orgdiv{Department of Teacher and Administrator Preparation}, \orgname{The University of Texas at Arlington}, \orgaddress{\street{701 Planetarium Pl}, \city{Arlington}, \postcode{76019}, \state{TX}, \country{USA}}}


\abstract{The marked difference in reported human Lyme disease cases between the northeastern and southeastern United States remains a significant concern. Differences in biodiversity and the questing behavior of \textit{Ixodes scapularis}, the primary vector of \textit{Borrelia burgdorferi}, the causative agent of Lyme disease, have been proposed as key drivers of geographic variation in disease incidence. In this study, we investigate how these differences, together with ratio-dependent tick-host contact rates, contribute to the north-south gap in Lyme disease cases. Building on our previous work, we develop a discrete-time system of difference equations to model infection dynamics between ticks and their sylvatic hosts. We derive an expression for the basic reproductive number and vary key parameters to assess their impact on infection persistence. We further examine the effects of these mechanisms on nymphal infection prevalence (NIP) and the density of infected nymphs (DIN), two key metrics for predicting disease risk. Our results indicate that under baseline conditions, southern host communities exhibit lower infection prevalence and reduced potential for pathogen invasion, consistent with observed regional patterns. However, southern host communities can support sustained transmission with different questing behaviors. Notably, this qualitative north-south difference emerges using parameter values obtained from the literature, without calibrating the model to reproduce geographic differences in disease prevalence. In addition, ratio-dependent host-finding success shows a saturating effect on the basic reproductive number, compared to the traditional assumption of density-independent host-finding success. These results highlight the importance of biodiversity, tick questing behavior, and using multiple metrics when assessing Lyme disease risk.} 

\keywords{\textit{Ixodes scapularis}, difference equations, Lyme disease, host diversity, ratio-dependent host-finding success}

\maketitle

\section{Introduction}\label{introduction}

\textit{Borrelia burgdorferi}, the bacterium responsible for Lyme disease, is maintained in nature through a complex multi-host, multi-stage transmission cycle involving larval, nymphal, and adult ticks. In the eastern United States, where Lyme disease is most prevalent, \textit{Ixodes scapularis} is the primary vector of \textit{B. burgdorferi} \citep{cdc_lyme}. Ticks typically acquire the pathogen during their larval or nymphal stages by feeding on infected reservoir hosts; once infected, they can then transmit the bacterium during subsequent feedings. Among the tick life stages, nymphs are considered the greatest risk to humans because of their small size and the timing of their peak activity \citep{logiudice2003ecology, lou2014impact}. As a result, nymphal infection prevalence (NIP) and the density of infected nymphs (DIN) are key ecological indicators used to estimate human risk of Lyme disease \citep{logiudice2003ecology, lou2014impact}. Interestingly, these measures vary across geographic regions in the eastern United States \cite{peterson2026host}.

Most reported Lyme disease cases in the eastern U.S. occur in the northeastern and upper midwestern regions \citep{cdc_lyme, cdc_lyme_surveillance}, despite the presence of the primary vector and many of the same sylvatic host species found throughout the region \citep{clark2002prevalence, levin1996reservoir}. This discrepancy has motivated much ecological and epidemiological research aimed at explaining this geographic variation in disease incidence. One common hypothesis emphasizes the role of biodiversity, suggesting that the richer and more even host communities in the southeastern U.S. reduce transmission by increasing the proportion of incompetent hosts available to feeding ticks (i.e., the ``dilution effect'') \citep{ostfeld2000biodiversity}. Supporting this hypothesis, \textit{I. scapularis} nymphs in the southeastern U.S. have very low infection prevalence, reported to be near 0\% in some locations, compared to approximately 24\% in northeastern populations \citep{ginsberg2021lyme, foster2023prevalence}. Although biodiversity likely contributes to this pattern \citep{logiudice2003ecology}, the drastic difference in NIP raises the question of whether biodiversity alone can fully explain the low prevalence of \textit{B. burgdorferi} in the southeastern U.S.

Another hypothesis involves regional variation in tick questing behavior. Southern \textit{I. scapularis} ticks tend to remain below the leaf litter for longer periods, likely due to selective pressure to avoid desiccating conditions \citep{ginsberg2017environmental}, though some studies suggest the difference is due to genetic variation between northern and southern populations \citep{arsnoe2019nymphal}. In either case, ticks that quest below the leaf litter are more likely to encounter reptiles, which are generally poor reservoirs for \textit{B. burgdorferi} \citep{arsnoe2019nymphal, giery2007role}. In contrast, northern ticks typically quest higher above the leaf litter and are more likely to encounter more competent hosts, such as small mammals \citep{arsnoe2019nymphal, logiudice2003ecology}. These differences in host encounter patterns could lead to variation in the probability that ticks become infected from north to south. 

Mathematical models provide a useful tool for investigating these hypotheses without the need for extensive data collection and have played a central role in understanding Lyme disease ecology. Transmission models for \textit{B. burgdorferi} have been well studied and range from deterministic differential equation systems to discrete-time, stochastic, and simulation-based frameworks \citep{ogden2005dynamic, gaff2020lymesim, caraco2002stage, zhang2023coupled}. Many published studies use ordinary differential equations and incorporate seasonal variation through non-autonomous systems or delay differential equations \citep{nguyen2019modeling, husar2024lyme}. While these approaches have provided important insight, the resulting models often become highly complex, which can limit the use of traditional analytical tools and the ability to draw biological conclusions. Discrete-time models provide an alternative framework that is well suited to the tick life cycle, since feeding, molting, and reproduction occur in distinct seasonal stages. Several discrete-time models have been proposed to describe tick infection dynamics \citep{porco1999mathematical, van1995controlling, schauber2002modeling, ackleh2022modeling}; however, many do not fully exploit the ability to sequence biological events before collapsing them into a single yearly time step to generate the resulting difference equation system (see, for example, \cite{carrera2020cost}).

Regardless of the modeling framework, many existing studies assume either frequency-dependent or density-dependent contact rates between ticks and hosts \citep{lou2017modeling}. While these assumptions can be reasonable, they may not fully capture the observed saturation that occurs when host availability is limited or tick densities are high \citep{ogden2005dynamic, mount1997simulation, ginsberg2020local}. In addition, questing behavior is rarely represented explicitly, despite evidence that differences in questing and host encounter patterns may contribute to regional variation in Lyme disease risk.

To begin addressing these limitations, we previously developed a discrete-time model incorporating host-to-tick ratio-dependent host-finding success and explicit questing behavior \citep{peterson2026model}. In that work, we examined baseline tick and host population dynamics in the absence of pathogen transmission. In the current study, we extend this framework to include the transmission dynamics of \textit{B. burgdorferi} between \textit{I. scapularis} and their sylvatic hosts, with the goal of assessing the effect of ratio-dependent host-finding success and tick questing behavior on pathogen transmission. Specifically, we aim to answer: \textit{How does ratio-dependent host-finding success influence transmission outcomes?} and \textit{To what extent can regional differences in questing behavior and biodiversity explain the geographic distribution of Lyme disease cases in the eastern United States?} By comparing the effects of ratio-dependent host-finding success, questing behavior, and host community composition, our model provides insight into which ecological mechanisms shape pathogen transmission and contribute to human disease risk. 

\section{Model Development}
\label{development}
In this section, we build on the demographic model developed in \cite{peterson2026model}, which describes the population dynamics of \textit{I. scapularis} and six representative host groups in both the northeastern and southeastern U.S. Unless otherwise noted, the host classifications and demographic assumptions remain unchanged. We therefore detail only those components needed to introduce pathogen transmission. Briefly, the host groups for the northeastern region are
\begin{itemize}
    \item Group 1: White-footed mice,
    \item Group 2: Chipmunks,
    \item Group 3: Short-tailed shrews, \textit{Sorex} shrews, and red-backed voles
    \item Group 4: Squirrels,
    \item Group 5: Five-lined skinks and eastern fence lizards
    \item Group 6: Deer, opossums, and raccoons,
\end{itemize}
and for the southeastern set:
\begin{itemize}
    \item Group 1: White-footed mice and cotton mice,
    \item Group 2: Chipmunks,
    \item Group 3: Short-tailed shrews and \textit{Sorex} shrews,
    \item Group 4: Squirrels,
    \item Group 5: Five-lined skinks, eastern fence lizards, southeastern five-lined skinks, and ground skinks,
    \item Group 6: Deer, opossums, and raccoons.
\end{itemize}

\subsection{Modeling Assumptions}\label{assumptions}
To model the transmission dynamics of \textit{B. burgdorferi}, we use the following set of assumptions (note that the demographic assumptions are listed in \cite{peterson2026model}):
\begin{itemize}
    \item Hosts that become infected during a given questing season are not assumed to be infectious to ticks feeding later within that same season.
    \item Hosts in group 6 are considered incompetent due to their low estimated reservoir competence \citep{logiudice2003ecology}.
    \item Infection does not affect host or tick death rates \citep{voordouw2015lyme}.
    \item Ticks and hosts do not recover from infection \citep{barbour1993biological, ostfeld2000biodiversity}.
    \item Infection is transmitted transstadially in ticks with no transovarial or co-feeding transmission \citep{rollend2013transovarial, kurokawa2020interactions, magnarelli1987transovarial,voordouw2015co, piesman2001efficacy}.
\end{itemize}

Because previous modeling studies have included other transmission pathways in their work, the last point mentioned here warrants further discussion and justification. We assume that transmission of \textit{B. burgdorferi} occurs only through horizontal transmission during tick blood meals. Although co-feeding and transovarial transmission have also been proposed as viable infection routes, neither appears to be as efficient as horizontal transmission. For example, co-feeding transmission seems to play a larger role in European systems involving \textit{B. afzelii} and \textit{I. ricinus} than in North American systems with \textit{B. burgdorferi} and \textit{I. scapularis} \citep{voordouw2015co}. Experimental studies further suggest that co-feeding transmission rates are relatively low (approximately one-twentieth of horizontal transmission rates) and typically occur only under unrealistic conditions such as extremely high tick burdens \citep{piesman2001efficacy}. In addition, the long-term viability of spirochetes acquired during co-feeding remains unclear \citep{voordouw2015co}. Similarly, transovarial transmission (from infected female to eggs) is primarily associated with \textit{B. miyamotoi} rather than \textit{B. burgdorferi} \citep{rollend2013transovarial} and is rare in nature \citep{magnarelli1987transovarial}. Consequently, in our system, transmission dynamics of \textit{B. burgdorferi} are driven by host infectivity and tick feeding interactions. 

\subsection{The Mathematical Model}\label{model}

Incorporating the assumptions outlined in Section \ref{assumptions}, we develop the following system of difference equations to model the transmission dynamics of \textit{Borrelia burgdorferi}:

\begin{equation}\label{fullsystem}
\begin{aligned}
    L_S\left(t+1\right)&=B(t)s_LS_L(t)\left(1-\sum_{i=1}^5 p_{i,L}(t) \beta_{i,L} \frac{H_{i_I}\left(t\right)}{H_i\left(t\right)+\frac{\Lambda_i}{k_i}}\right)\\
    L_I\left(t+1\right)&=B(t)s_LS_L(t)\left(\sum_{i=1}^5 p_{i,L}(t) \beta_{i,L} \frac{H_{i_I}\left(t\right)}{H_i\left(t\right)+\frac{\Lambda_i}{k_i}}\right)\\
    N_S\left(t+1\right)&=s_NS_N(t)L_S\left(t\right)\left(1-\sum_{i=1}^5p_{i,N}(t)\beta_{i,N}\frac{H_{i_I}\left(t\right)e^{-\mu_i}}{H_i\left(t\right)}\right)\\
    N_I\left(t+1\right)&=s_NS_N(t)\left(L_I\left(t\right)+L_S\left(t\right)\left(\sum_{i=1}^5p_{i,N}(t)\beta_{i,N}\frac{H_{i_I}\left(t\right)e^{-\mu_i}}{H_i\left(t\right)}\right)\right)\\
    A_S\left(t+1\right)&=s_AS_A(t)s_NS_N(t)L_S\left(t\right)\left(1-\sum_{i=1}^5p_{i,N}(t)\beta_{i,N}\frac{H_{i_I}\left(t\right)e^{-\mu_i}}{H_i\left(t\right)}\right)\\
    A_I\left(t+1\right)&=s_AS_A(t)s_NS_N(t)\left(L_I\left(t\right)+L_S\left(t\right)\left(\sum_{i=1}^5p_{i,N}(t)\beta_{i,N}\frac{H_{i_I}\left(t\right)e^{-\mu_i}}{H_i\left(t\right)}\right)\right)\\~\\
    H_{i_S}\left(t+1\right)&=\left(\left(H_{i_S}(t)+\frac{\Lambda_i}{2}\left(1+e^{\frac{\mu_i}{2}}\right)\right)e^{-\mu_i}\right)e^{-\frac{p_{i,N}(t)\alpha_{i}s_NS_N(t)L_I\left(t\right)}{H_i\left(t\right)}}\quad \text{for $i=1,2,3,4$}\\
    H_{i_I}\left(t+1\right)&=\left(H_{i_I}\left(t\right)+\left(H_{i_S}(t)+\frac{\Lambda_i}{2}\left(1+e^{\frac{\mu_i}{2}}\right)\right)\left(1-e^{-\frac{p_{i,N}(t)\alpha_{i}s_NS_N(t)L_I\left(t\right)}{H_i\left(t\right)}}\right)\right)e^{-\mu_i} \quad \text{for $i=1,2,3,4$}\\
    H_{5_S}\left(t+1\right)&=\left(\left(H_{5_S}\left(t\right)+\Lambda_5\right)e^{-\mu_5}\right)e^{-\frac{p_{5,N}(t)\alpha_{5}s_NS_N(t)L_I\left(t\right)}{H_5\left(t\right)}}\\
    H_{5_I}\left(t+1\right)&=\left(H_{5_I}\left(t\right)+\left(H_{5_S}\left(t\right)+\Lambda_5\right)\left(1-e^{-\frac{p_{i,N}\left(t\right)\alpha_is_NS_N\left(t\right)L_I\left(t\right)}{H_5\left(t\right)}}\right)\right)e^{-\mu_5}\\
    H_6\left(t+1\right)&=\left(H_6(t)e^{-\frac{\mu_6}{2}}+\Lambda_6\right)e^{-\frac{\mu_6}{2}}
\end{aligned}
\end{equation}
where 
\begin{align*}
B(t)&=\frac{r\frac{A\left(t\right)}{2}}{\frac{\frac{A\left(t\right)}{2}}{H_6\left(t\right)}+K}\\
S_L(t) &= 1 - e^{- \frac{z_L}{B(t)}\left(\sum_{i=1}^5 b_{i,L} \left(H_i(t) + \frac{\Lambda_i}{k_i}\right) + b_{6,L} \left(H_6(t)e^{- \mu_6/2} + \Lambda_6\right)\right)}, \\
S_N(t) &= 1 - e^{- \frac{z_N}{L(t)}\left(\sum_{i=1}^5 b_{i,N}\sigma_{i,N} H_i(t) + b_{6,N}\sigma_{6,N} H_6(t)e^{- \mu_6/2}\right)}, \\
S_A(t) &= 1 - e^{- \frac{z_A}{s_NS_N(t)L(t)}b_{6,A}\sigma_{6,A} H_6(t)}, \\
p_{i,L}(t) &= \frac{b_{i,L} \left(H_i(t) + \frac{\Lambda_i}{k_i}\right) e^{- \mu_i/2}}{\sum_{j=1}^5 b_{j,L} \left(H_j(t) + \frac{\Lambda_j}{k_j}\right) e^{- \mu_j/2} + b_{6,L} H_6(t)}, \quad i = 1,\dots,5, \\
p_{6,L}(t) &= \frac{b_{6,L} H_6(t)}{\sum_{j=1}^5 b_{j,L} \left(H_j(t) + \frac{\Lambda_j}{k_j}\right) e^{- \mu_j/2} + b_{6,L} H_6(t)}, \\
p_{i,N}(t) &= \frac{b_{i,N}\sigma_{i,N} H_i(t)}{\sum_{j=1}^5 b_{j,N}\sigma_{j,N} H_j(t) + b_{6,N}\sigma_{6,N} H_6(t)e^{- \mu_6/2}}, \quad i = 1,\dots,5, \\
p_{6,N}(t) &= \frac{b_{6,N}\sigma_{6,N} H_6(t)e^{- \mu_6/2}}{\sum_{j=1}^5 b_{j,N}\sigma_{j,N} H_j(t) + b_{6,N}\sigma_{6,N} H_6(t)e^{- \mu_6/2}}, \\
k_i &= 2 \quad \text{for } i = 1,2,3,4; \quad k_5 = 1, \\
b_{i,j} &= \hat{b}_{i,j}c \quad \text{for } i = 1,\dots,4,6; \quad b_{5,j} = \hat{b}_{5,j}(1 - c) \quad \text{and} \quad  j\in\{L,N\}.
\end{align*}

Table~\ref{param_table} details the model parameters and state variables. The model tracks larvae ($L$), nymphs ($N$), adults ($A$), and size host groups ($H_i$, $i\in\{1,\dots,6\}$). The three tick stages and host groups 1--5 are divided into susceptible ($S$) and infected ($I$) classes, while host group 6 is assumed incompetent and is therefore represented by a single population. To construct the model, we divided each year into seven nonoverlapping biological events (see Appendix~\ref{ap:timeline} and Table~\ref{timeline}). We then combined these events to derive the final model.

The demographic component, including host recruitment and mortality, tick survival and reproduction, seasonal feeding preferences, host-density scaling, and the event-based construction of the annual dynamics, is retained from \cite{peterson2026model}. Here, we focus on the modifications required to incorporate infection dynamics.

For each tick stage $j \in \{L,N\}$, $\beta_{i,j}$ denotes the proportion of ticks that become infected after feeding on an infected host from group $i$ with $i\in \{1,\dots,5\}$. Conversely, $\alpha_i$ represents the proportion of hosts in group $i$ that become infected following a bite from an infected nymph. For groups with multiple infectious hosts (groups 3 and 5), host infectivity is computed as a density-weighted average across species. Because host group 6 is assumed incompetent, infected deer do not contribute to pathogen transmission, and their infection dynamics are not tracked. As a result, because adults only feed on host group 6, the proportion of infected nymphs and adults is the same.

The proportion of ticks in stage $j$ feeding on host group $i$ is denoted by $p_{i,j}(t)$ and depends on the relative availability of host groups. Thus, transmission depends on both the prevalence of infection within host groups and the distribution of tick feeding across hosts.

As in \cite{peterson2026model}, differences in questing behavior are represented by $c$, the proportion of the larval and nymphal questing season spent above the leaf litter. For these stages, seasonal feeding preferences are scaled by $c$ for above-litter hosts (groups 1--4) and by $(1-c)$ for the below-litter host group (group 5). Varying $c$ therefore alters the distribution of tick-host encounters and allows us to investigate how questing behavior influences infection dynamics and measures of Lyme disease risk.

\begin{table}[h!]
    \centering
    \setlength{\tabcolsep}{5.5pt}
    \begin{tabular}{cp{7.5cm}c}
    \toprule
    \textbf{State Var.} & \textbf{Definition} & \textbf{Units}\\
    \midrule
    $L_{S,I}$ & Susceptible and infected larvae & larvae/ha\\
    $N_{S,I}$ & Susceptible and infected nymphs & nymphs/ha\\
    $A_{S,I}$ & Susceptible and infected adults & adults/ha\\
    $H_{1_{S,I}}$ & Susceptible and infected white-footed mice & mice/ha\\
    $H_{2_{S,I}}$ & Susceptible and infected chipmunks & chipmunks/ha\\
    $H_{3_{S,I}}$ & Susceptible and infected short-tailed shrews & shrews/ha\\
    $H_{4_{S,I}}$ & Susceptible and infected squirrels & squirrels/ha\\
    $H_{5_{S,I}}$ & Susceptible and infected lizards& lizards/ha\\
    $H_6$ & Deer & deer/ha\\
    \midrule
    \textbf{Parameter} & \textbf{Definition} & \textbf{Units} \\
    \midrule
    $\Lambda_{i}$  & Birth/recruitment of host group $i$ & host $i$/ha\\
    $\mu_{i}$ & Natural death of host $i$ & dimensionless\\
    $s_j$ & Natural survival of ticks in life stage $j$ & dimensionless\\
    $z_j$ & Host-finding calibration constant & dimensionless\\
    $r$ & Fecundity & larvae/deer\\
    $K$ & Tick-host ratio at which tick reproduction is at half its maximum & adults/deer\\
    $\beta_{i,j}$ & Proportion of ticks in life stage $j$ that become infected after feeding on an infected host from group $i$ & dimensionless\\
    $\alpha_i$ & Proportion of hosts from group $i$ that become infected after a bite from an infected nymph & host $i$/nymph\\
    $\hat{b}_{i,j}$ & Relative seasonal feeding potential of tick stage $j$ for host group $i$& 1/host $i$\\
    $\sigma_{i,j}$ & Host density scaling constant & dimensionless\\
    $c$ & Proportion of questing season spent above the leaf litter& dimensionless\\
    
    \bottomrule
    \end{tabular}
    \caption{State variables and parameters for Model \eqref{fullsystem} with their descriptions and units. $i \in \{1,2,3,4,5,6\}$ and $j \in \{L,N,A\}$. }
    \label{param_table}
\end{table}

\section{Baseline Demographic Dynamics}
Before analyzing infection dynamics, it is important to understand the baseline behavior of the tick-host system without pathogen transmission. The demographic component of Model~\eqref{fullsystem}, obtained by summing the susceptible and infected equations for each population, was previously analyzed in \cite{peterson2026model}. 

In that study, the long-term behavior of the tick population depended on the demographic reproductive number
\[\mathcal{R}_d= \frac{rS}{K},\]
where $S=\frac{1}{2}s_Ls_Ns_A$. The demographic reproductive number captures the balance between reproduction, survival, and the tick saturation constant. It represents the expected number of offspring per female tick over its lifetime and is derived by linearizing the system about the extinction fixed point and solving for the spectral radius of the Jacobian \citep{cushing1994net, cushing1989ebenman}. A qualitative stability analysis shows that when $\mathcal{R}_d<1$, tick populations converge to a stable extinction fixed point. For $\mathcal{R}_d=1$ the system experiences a $+1$/$-1$ bifurcation pair: a transcritical bifurcation ($\lambda_1=+1$) as the extinction fixed point exchanges stability with a positive existence fixed point, and a period doubling bifurcation ($\lambda_2=-1$) from which an unstable synchronous 2-cycle emerges, in which only larvae ($L>0, N=A=0$) or nymphs and adults ($N,A>0, L=0$) are present at a given time. This structure arises from the 2-year life cycle of \textit{I. scapularis}, which introduces a one-year delay between the absence of larvae and its effects on subsequent tick populations. For $\mathcal{R}_d>1$, the positive fixed point stabilizes while the synchronous 2-cycle remains unstable for all $\mathcal{R}_d$ values. 

\section{Qualitative Analysis}\label{modelanalysis_constant}
In this section, we analyze the infection dynamics of Model~\eqref{fullsystem} assuming that the demographic system has converged to its stable positive fixed point, so that total population sizes are constant from year to year. We derive an expression for the basic reproductive number, which provides a threshold for disease invasion, and discuss conditions for the existence of an endemic fixed point. 

\subsection{Basic Reproductive Number}\label{BRN_FP}
To determine the basic reproductive number (denoted $\mathcal{R}_0$), we analyze the infected populations of Model \eqref{fullsystem} after all demographic variables have reached equilibrium:
\begin{equation}\label{BRNsystem}
\begin{aligned}
    L_I\left(t+1\right)&=\frac{r\frac{A_{\infty}}{2}}{\frac{\frac{A_{\infty}}{2}}{H_{6_{\infty}}}+K}s_LS_{L_{\infty}}\left(\sum_{i=1}^5 p_{{i,L}_{\infty}} \beta_{i,L} \frac{H_{i_I}\left(t\right)}{H_{i_{\infty}}+\frac{\Lambda_i}{k_i}}\right)\\
    N_I\left(t+1\right)&=s_NS_{N_\infty}\left(L_I\left(t\right)+L_S\left(t\right)\left(\sum_{i=1}^5p_{{i,N}_\infty}\beta_{i,N}\frac{H_{i_I}\left(t\right)e^{-\mu_i}}{H_{i_\infty}}\right)\right)\\
    A_I\left(t+1\right)&=s_AS_{A_\infty}s_NS_{N_\infty}\left(L_I\left(t\right)+L_S\left(t\right)\left(\sum_{i=1}^5p_{{i,N}_\infty}\beta_{i,N}\frac{H_{i_I}\left(t\right)e^{-\mu_i}}{H_{i_\infty}}\right)\right)\\
    H_{i_I}\left(t+1\right)&=\left(H_{i_I}\left(t\right)+ \left(H_{i_S}(t)+\lambda_i\right)\left(1-e^{-\frac{p_{{i,N}_\infty}\alpha_{i}s_NS_{N_\infty}L_I\left(t\right)}{H_{i_\infty}}}\right)\right)e^{-\mu_i},
\end{aligned}
\end{equation}
where
\begin{align*}
    \lambda_i = \frac{\Lambda_i}{2}\left(1+e^{\frac{\mu_i}{2}}\right) \quad \text{for $i \in \{1,2,3,4\}$}, \quad
    \lambda_5 = \Lambda_5.
\end{align*}
At demographic equilibrium, both the feeding-success survival parameters ($S_j(t)$) and host contact proportions ($p_{i,j}(t)$) are constant and are denoted $S_{j_\infty}$ and $p_{{i,j}_{\infty}}$, respectively.

Because the larva-host subsystem decouples from the nymphal and adult equations, $\mathcal{R}_0$ is determined entirely by transmission processes involving larval ticks and their host interactions. This reflects a key modeling assumption: nymphs that acquire infection during the nymphal stage do not contribute to subsequent host infection as adults because group 6 is assumed to be incompetent and infection dynamics in this group are not tracked. Consequently, the adult infected class is also not involved in the transmission process here.

Applying the next-generation operator method for discrete systems \citep{allen2008basic} (see Appendix~\ref{ap:BRN}), we obtain the following expression for $\mathcal{R}_0$:
\begin{align*}
\mathcal{R}_0 &= \sqrt{
        \sum_{i=1}^5 R_iS_{L_\infty}S_{N_\infty}\frac{e^{-\mu_i}}{1-e^{-\mu_i}}},
\end{align*}
where 
\begin{align*}
    R_i&=\frac{r\frac{A_\infty}{2}}{\frac{\frac{A_\infty}{2}}{H_{6_\infty}}+K}s_Ls_N\frac{p_{i,L}p_{i,N}\alpha_i\beta_{i,L}}{H_{i_\infty}+\frac{\Lambda_i}{2}}\frac{H_{i_\infty}+\lambda_i}{H_{i_\infty}}.
\end{align*}
Here, $R_iS_{L_\infty}S_{N_\infty}$ represents the contribution of host group $i$ to pathogen transmission through a complete larva-host-larva infection cycle. The quantity $\sum_{i=1}^5R_iS_{L_\infty}S_{N_\infty}$ therefore captures the baseline larva-host-larva transmission potential of the host community over a single tick life cycle. The term $\frac{e^{-\mu_i}}{1-e^{-\mu_i}}$ represents the cumulative survival contribution of host group $i$.

By the next-generation operator method, the disease-free fixed point (where all infected populations are zero) is locally asymptotically stable whenever $\mathcal{R}_0<1$.

\subsection{Existence of an Endemic Fixed Point}\label{EE_FP}
In addition to the all-zero disease-free fixed point, we now consider the existence of a positive (endemic) fixed point. Reducing Model~\eqref{fullsystem} to a single fixed-point equation for $L_I$ and analyzing the resulting equation (see Appendix~\ref{ap:ee}) yields
\begin{align*}
    L_I=\frac{r\frac{A_{\infty}}{2}}{\frac{\frac{A_{\infty}}{2}}{H_{6_{\infty}}}+K}s_LS_{L_\infty}\left(\sum_{i=1}^5 p_{{i,L}_{\infty}} \beta_{i,L} \frac{H_{i_{\infty}}+\lambda_i}{H_{i_\infty}+\frac{\Lambda_i}{k_i}}\left(\frac{\left(1-e^{-\frac{p_{{i,N}_\infty}\alpha_{i}s_NS_{N_\infty}L_{I}}{H_{i_\infty}}}\right)e^{-\mu_i}}{1-e^{-\frac{p_{{i,N}_\infty}\alpha_{i}s_NS_{N_\infty}L_{I}}{H_{i_\infty}}}e^{-\mu_i}}\right)\right),
\end{align*}
 and shows that a unique endemic fixed point exists whenever $\mathcal{R}_0^2>1$. Biologically, this condition indicates that the combined efficiency of larva-host-larva transmission across all host groups is sufficient to sustain infection in the host-tick system. Moreover, this analysis implies that the endemic fixed point is locally asymptotically stable whenever it exists (see again Appendix~\ref{ap:ee}).

\section{Quantitative Results}
\label{simulations}
In this section, we discuss parameter estimates used to parameterize the model, detail our numerical analysis, and discuss the results. 

\subsection{Parameter Estimates}
To parameterize the model for northern and southern regions, we compiled two sets of data based on geographic location. Parameter values for both regions are presented in Table~\ref{paramvaluesnorthsouth}.

Most parameter values were obtained from published studies, and a detailed discussion can be found in \cite{peterson2026model} and \cite{peterson2026host}. 

Baseline host and tick population densities were used to calibrate the host-finding success scaling parameters $z_L$, $z_N$, and $z_A$, as well as the host mortality terms $\mu_i$. 

Host infectivity values ($\beta_{i,j}$) were based on controlled experimental studies designed to measure host infectivity rather than field prevalence \citep{brunner2008estimating, dougherty2024role, levin1996reservoir}. In the absence of data indicating otherwise, we assumed equal infectivity across tick life stages and that hosts bitten by infected nymphs always acquire infection (i.e., $\alpha_i=1$ for all $i =1,\dots,5$). Because infectivity is species-specific, we did not vary these values between regions. For brevity, parameters $\alpha_i$, $\beta_{i,L}$, and $\beta_{i,N}$ are described as infection probabilities in Table~\ref{paramvaluesnorthsouth}, though in the deterministic framework presented here they more precisely represent the proportion of infectious contacts resulting in transmission.

\begin{table}[htbp!]
\centering
\resizebox{\textwidth}{!}{%
\begin{tabular}{llrrr}
\toprule
& & \multicolumn{2}{c}{\textbf{Value [Source]}} & \\
\cmidrule(lr){3-4}
\makecell{\textbf{Baseline}\\\textbf{Densities}} & \textbf{Description} & \textbf{North} & \textbf{South} & \textbf{Units} \\
\midrule
$L(0)$ & Larvae
    & $115{,}000$ \citep{daniels2000estimating}
    & $115{,}000$ \citep{daniels2000estimating}
    & larvae/ha \\
$N(0)$ & Nymphs
    & $11{,}500$ \citep{daniels2000estimating}
    & $11{,}500$ \citep{daniels2000estimating}
    & nymph/ha \\
$A(0)$ & Adults
    & $3{,}450$ \citep{daniels2000estimating}
    & $3{,}450$ \citep{daniels2000estimating}
    & adult/ha \\
$H_{1}(0)$ & Mice
    & $40$ \citep{peterson2026host}
    & $27.9$
    & mice/ha \\
$H_{2}(0)$ & Chipmunks
    & $30$ \citep{peterson2026host}
    & $30$
    & chipmunks/ha \\
$H_{3}(0)$ & Shrews
    & $59.2$ \citep{peterson2026host}
    & $30$
    & shrews/ha \\
$H_{4}(0)$ & Squirrels
    & $6$ \citep{peterson2026host}
    & $6$
    & squirrels/ha \\
$H_{5}(0)$ & Lizards
    & $27$ \citep{peterson2026host}
    & $445.6$
    & lizard/ha \\
$H_{6}(0)$ & Deer
    & $0.46$ \citep{peterson2026host}
    & $0.40$
    & deer/ha \\

\midrule
& & \multicolumn{2}{c}{\textbf{Value [Source]}} & \\
\cmidrule(lr){3-4}
\textbf{Parameters} & \textbf{Description} & \textbf{North} & \textbf{South} & \textbf{Units} \\
\midrule
$\Lambda_1$ & Recruitment of mice
    & $300^{a}$
    & $273.5^{a}$
    & mice/ha \\
$\Lambda_2$ & Recruitment of chipmunks
    & $50$ \citep{snyder1982tamias}
    & $50$ \citep{snyder1982tamias}
    & chipmunks/ha \\
$\Lambda_3$ & Recruitment of shrews
    & $659.8^{b}$
    & $174.9^{b}$
    & shrews/ha \\
$\Lambda_4$ & Recruitment of squirrels
    & $27^{c}$
    & $27^{c}$
    & squirrels/ha \\
$\Lambda_5$ & Recruitment of lizards
    & $36.1^{d}$
    & $1278.1^{d}$
    & lizards/ha \\
$\Lambda_6$ & Recruitment of deer
    & $0.32^{e}$
    & $0.21^{e}$
    & deer/ha \\

$s_L$ & Natural survival of larvae
    & $0.43$ \citep{brunner2023off}
    & $0.43$ \citep{brunner2023off}
    & dimensionless \\
$s_N$ & Natural survival of nymphs
    & $0.55$ \citep{brunner2023off}
    & $0.55$ \citep{brunner2023off}
    & dimensionless \\
$s_A$ & Natural survival of adults
    & $0.84$ \citep{brunner2023off}
    & $0.84$ \citep{brunner2023off}
    & dimensionless \\

$\mu_1$ & Natural death of mice
    & $3.12$
    & $3.55$
    & dimensionless \\
$\mu_2$ & Natural death of chipmunks
    & $1.21$
    & $1.21$
    & dimensionless \\
$\mu_3$ & Natural death of shrews
    & $3.77$
    & $2.73$
    & dimensionless \\
$\mu_4$ & Natural death of squirrels
    & $2.36$
    & $2.36$
    & dimensionless \\
$\mu_5$ & Natural death of lizards
    & $0.85$
    & $1.35$
    & dimensionless \\
$\mu_6$ & Natural death of deer
    & $0.68$
    & $0.52$
    & dimensionless \\

$\hat{b}_{1,L}$ & Relative larval preference for mice
    & $404.4$ \citep{ginsberg2021lyme}
    & $152.0$ \citep{ginsberg2021lyme}
    & larvae/mouse \\
$\hat{b}_{2,L}$ & Relative larval preference for chipmunks
    & $245.2$ \citep{ginsberg2021lyme}
    & $220.6$
    & larvae/chipmunk \\
$\hat{b}_{3,L}$ & Relative proportion of larvae feeding on shrews
    & $483.7$ \citep{ginsberg2021lyme}
    & $1{,}693.6$ \citep{ginsberg2021lyme}
    & larvae/shrew \\
$\hat{b}_{4,L}$ & Relative proportion of larvae feeding on squirrels
    & $911.3$ \citep{main1982immature}
    & $810.8$
    & larvae/squirrel \\
$\hat{b}_{5,L}$ & Relative proportion of larvae feeding on lizards
    & $35.3$ \citep{ginsberg2021lyme}
    & $25.7$ \citep{ginsberg2021lyme}
    & larvae/lizard \\
$\hat{b}_{6,L}$ & Relative proportion of larvae feeding on deer
    & $1{,}0817.1$ \citep{piesman1979role}
    & $962.3$
    & larvae/host 6 \\

$\hat{b}_{1,N}$ & Relative proportion of nymphs feeding on mice
    & $20.4$ \citep{ginsberg2021lyme}
    & $15.2$ \citep{ginsberg2021lyme}
    & nymphs/mouse \\
$\hat{b}_{2,N}$ & Relative proportion of nymphs feeding on chipmunks
    & $188.4$ \citep{ginsberg2021lyme}
    & $167.6$
    & nymphs/chipmunk \\
$\hat{b}_{3,N}$ & Relative proportion of nymphs feeding on shrews
    & $1.9$ \citep{ginsberg2021lyme}
    & $1.4$
    & nymphs/shrew \\
$\hat{b}_{4,N}$ & Relative proportion of nymphs feeding on squirrels
    & $1{,}527.8$ \citep{main1982immature}
    & $1{,}359.3$
    & nymphs/squirrel \\
$\hat{b}_{5,N}$ & Relative proportion of nymphs feeding on lizards
    & $0.4$ \citep{ginsberg2021lyme}
    & $3.0$ \citep{ginsberg2021lyme}
    & nymphs/lizard \\
$\hat{b}_{6,N}$ & Relative proportion of nymphs feeding on deer
    & $4{,}132.2$ \citep{piesman1979role}
    & $3{,}676.5$
    & nymphs/deer \\

$b_{6,A}$ & Relative proportion of adults feeding on deer
    & $1{,}037.5$ \citep{piesman1979role}
    & $1{,}037.5$ \citep{piesman1979role}
    & adults/deer \\

\makecell[l]{$\sigma_{1,N}$\\\vspace{0.1cm}}
& \makecell[l]{Scaling constant for preference of nymphs\\
(versus larval) for host group 1}
& ---
& \makecell[r]{$1.24$\\\vspace{0.1cm}}
& \makecell[r]{dimensionless\\\vspace{0.1cm}} \\

\makecell[l]{$\sigma_{3,N}$\\\vspace{0.1cm}}
& \makecell[l]{Scaling constant for preference of nymphs\\
(versus larval) for host group 3}
& \makecell[r]{$3.59$\\\vspace{0.1cm}}
& \makecell[r]{$0.90$\\\vspace{0.1cm}}
& \makecell[r]{dimensionless\\\vspace{0.1cm}} \\

\makecell[l]{$\sigma_{5,N}$\\\vspace{0.1cm}}
& \makecell[l]{Scaling constant for preference of nymphs\\
(versus larval) for host group 5}
& \makecell[r]{$1.07$\\\vspace{0.1cm}}
& \makecell[r]{$1.19$\\\vspace{0.1cm}}
& \makecell[r]{dimensionless\\\vspace{0.1cm}} \\

\makecell[l]{$\sigma_{6,N}$\\\vspace{0.1cm}}
& \makecell[l]{Scaling constant for preference of nymphs\\
(versus larval) for host group 6}
& \makecell[r]{$1.25$\\\vspace{0.1cm}}
& \makecell[r]{$1.00$\\\vspace{0.1cm}}
& \makecell[r]{dimensionless\\\vspace{0.1cm}} \\

\makecell[l]{$\sigma_{6,A}$\\\vspace{0.1cm}}
& \makecell[l]{Scaling constant for preference of adults\\
(versus larval) for host group 6}
& \makecell[r]{$1.08$\\\vspace{0.1cm}}
& \makecell[r]{$1.00$\\\vspace{0.1cm}}
& \makecell[r]{dimensionless\\\vspace{0.1cm}} \\

$r$ & Fecundity of adult ticks
    & $2{,}520{,}000$ \citep{baker2024feeding, piesman1979role}
    & $2{,}520{,}000$
    & larvae/host 6 \\

\makecell[l]{$K$\\\vspace{0.1cm}}
& \makecell[l]{Tick-to-host ratio at which tick reproduction is at\\
half its maximum}
& \makecell[r]{$5$\\\vspace{0.1cm}}
& \makecell[r]{$5$\\\vspace{0.1cm}}
& \makecell[r]{adult/host 6\\\vspace{0.1cm}} \\

$z_L$ & Larval host-finding success calibration constant
    & $1.24$
    & $1.45$
    & dimensionless \\
$z_N$ & Nymphal host-finding success calibration constant
    & $1.30$
    & $2.58$
    & dimensionless \\
$z_A$ & Adult host-finding success calibration constant
    & $1.79$
    & $2.23$
    & dimensionless \\
$c$ & Proportion of questing season spent above the leaf litter
    & $0.18$ \citep{arsnoe2019nymphal}
    & $0.02$ \citep{arsnoe2019nymphal}
    & dimensionless \\
    $\alpha_1$ & Infection probability from nymphs to mice  & $1$ \citep{sze2012borrelia, brunner2008estimating}
    & $1$ \citep{sze2012borrelia, brunner2008estimating} & mice/nymph \\
    $\alpha_2$ & Infection probability from nymphs to chipmunks  & $1$ & $1$ & chipmunks/nymph \\
    $\alpha_3$ & Infection probability from nymphs to shrews  & $1$ & $1$  & shrews/nymph \\
    $\alpha_4$ & Infection probability from nymphs to squirrels  & $1$ & $1$ & squirrels/nymph \\
    $\alpha_5$ & Infection probability from nymphs to lizards  & $1$ & $1$ & lizards/nymph \\
    $\beta_{1,L}$ & Infection probability from mice to larvae & $0.91$ \citep{brunner2008estimating} & $0.91$ \citep{brunner2008estimating} & dimensionless \\ 
    $\beta_{2,L}$ & Infection probability from chipmunks to larvae & $0.57$ \citep{brunner2008estimating} & $0.57$ \citep{brunner2008estimating} & dimensionless \\
    $\beta_{3,L}$ & Infection probability from shrews to larvae & $0.56$ \citep{brunner2008estimating, dougherty2024role} & $0.56$ \citep{brunner2008estimating, dougherty2024role} & dimensionless \\
    $\beta_{4,L}$ & Infection probability from squirrels to larvae & $0.06$ \citep{brunner2008estimating} & $0.06$ \citep{brunner2008estimating} & dimensionless \\
    $\beta_{5,L}$ & Infection probability from lizards to larvae & $0.02$ \citep{levin1996reservoir} & $0.02$ \citep{levin1996reservoir} & dimensionless \\
    $\beta_{1,N}$ & Infection probability from mice to nymphs & $0.91$ \citep{brunner2008estimating} & $0.91$ \citep{brunner2008estimating} & dimensionless \\ 
    $\beta_{2,N}$ & Infection probability from chipmunks to nymphs & $0.57$ \citep{brunner2008estimating} & $0.57$ \citep{brunner2008estimating} & dimensionless \\
    $\beta_{3,N}$ & Infection probability from shrews to nymphs & $0.56$ \citep{brunner2008estimating, dougherty2024role} & $0.56$ \citep{brunner2008estimating, dougherty2024role} & dimensionless \\
    $\beta_{4,N}$ & Infection probability from squirrels to nymphs & $0.06$ \citep{brunner2008estimating} & $0.06$ \citep{brunner2008estimating} & dimensionless \\
    $\beta_{5,N}$ & Infection probability from lizards to nymphs & $0.02$ \citep{levin1996reservoir} & $0.02$ \citep{levin1996reservoir} & dimensionless \\

\bottomrule
\end{tabular}
}

\caption{Baseline population densities and parameters estimated from literature for the northeast (North) and southeast (South) for Model~\eqref{fullsystem} with their descriptions, estimated values, and units. Parameters without cited sources were calculated as described in \cite{peterson2026model}.\\
$^{a}$\citep{morris1986proximate, mehlhop1978population, chitty1941territorial}\\
$^{b}$\citep{innes1990high, whitaker1998mammals, merritt1981clethrionomys, innes1984life}\\
$^{c}$\citep{layne1952genitale, barkalow1970vital, kemp1970dynamics}\\
$^{d}$\citep{COSEWIC2007, seburn1990population, tinkle1972sceloporus}\\
$^{e}$\citep{verme1984physiology, lay1942ecology, johnson2016habitat}}
\label{paramvaluesnorthsouth}

\end{table}

The equilibria of Model~\eqref{fullsystem} were computed numerically in MATLAB\textsuperscript{\textregistered} with \verb|fsolve|, and the resulting solutions are presented in the following sections. Multiple initial conditions spanning the feasible ranges of tick and host populations were used to locate biologically meaningful fixed points. For each equilibrium, local stability was determined from the eigenvalues of the Jacobian matrix, which was evaluated numerically at the corresponding solution. The same numerical approach was applied to the second-generation system to identify 2-cycles and evaluate their stability.

\subsection{Demographic and Behavioral Influence on the Basic Reproductive Number}\label{quantitative_analysis}

In this section, we investigate how key demographic and behavioral parameters influence pathogen transmission through their effects on the basic reproductive number $\mathcal{R}_0$. Since $\mathcal{R}_0$ is defined only for stable demographic dynamics, this quantity is computed only when the positive existence fixed point is stable (i.e., when $\mathcal{R}_d>1$).

\begin{figure*}[ht!]
\centering
\includegraphics[width=\linewidth]{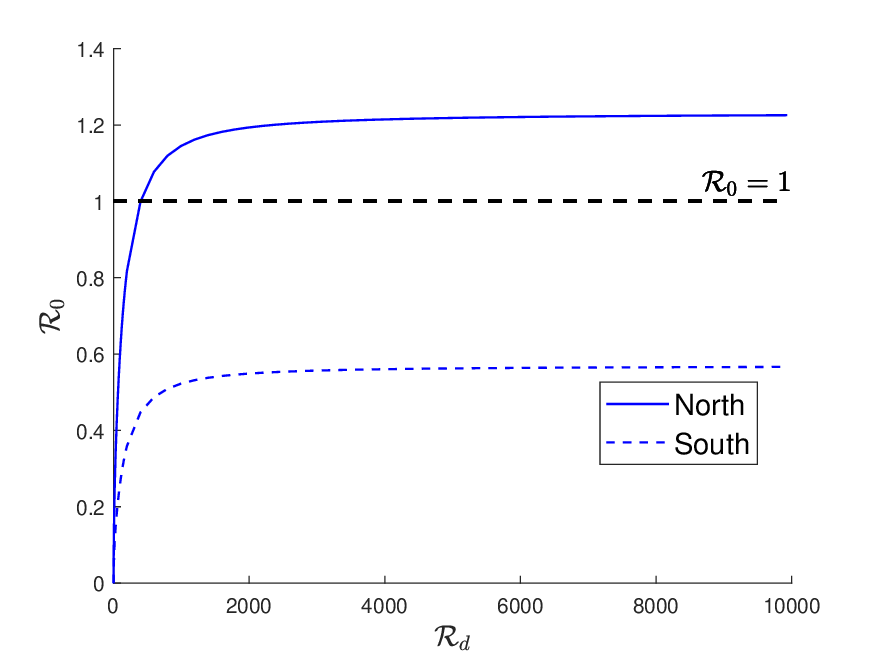}
    \caption{Plot of $\mathcal{R}_0$ as a function of the demographic reproductive number $\mathcal{R}_d=\frac{rS}{K}$ under the northern (solid) and southern (dashed) parameter sets from Table~\ref{paramvaluesnorthsouth}, except $r$, which is varied to change $\mathcal{R}_d$. In the figure, blue lines indicate $\mathcal{R}_0$ values evaluated at the coexistence fixed point, while the dashed black line represents the threshold $\mathcal{R}_0=1$}
    \label{R0_rSK}
\end{figure*}
\begin{figure*}[ht!]
\centering
    \includegraphics[width=\linewidth]{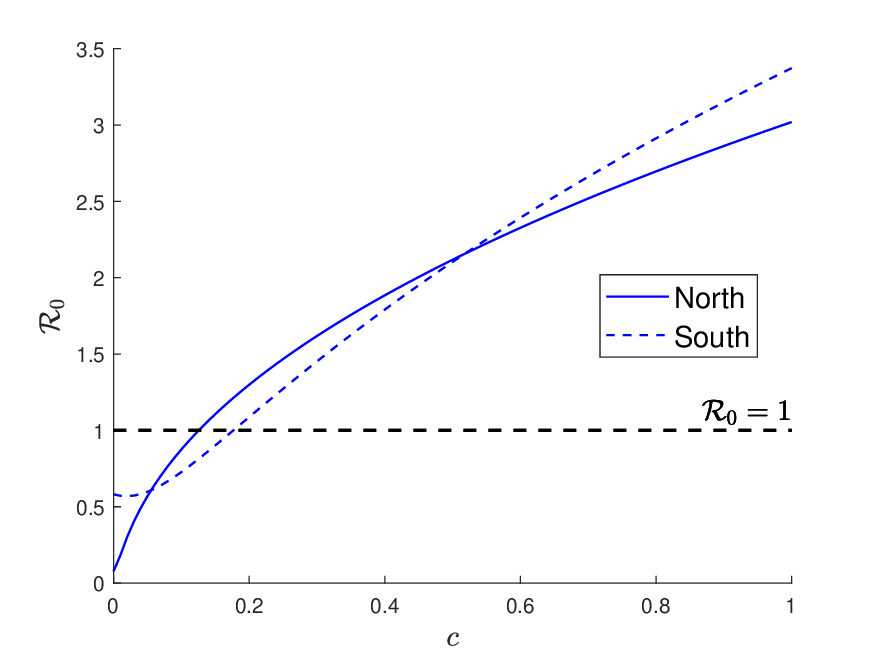}
    \caption{Diagram of $\mathcal{R}_0$ with respect to the questing parameter $c$ under the northern (solid) and southern (dashed) parameter sets from Table~\ref{paramvaluesnorthsouth}. Blue lines indicate $\mathcal{R}_0$ values evaluated at the coexistence fixed point, while the dashed black line represents the threshold $\mathcal{R}_0=1$.}
    \label{R0_c}
\end{figure*}

Figure~\ref{R0_rSK} illustrates how the basic reproductive number varies with the demographic reproductive number $\mathcal{R}_d=\frac{rS}{K}$ under the northern and southern parameter sets. In the figure, the blue curves represent $\mathcal{R}_0$ while the dashed black line represents the threshold $\mathcal{R}_0=1$. 

Under the northern parameter set, $\mathcal{R}_0$ exceeds one for nearly all of the values of $\mathcal{R}_d$ considered. Our estimated demographic reproductive number, $\mathcal{R}_d=50{,}000$, lies well within this range, indicating that pathogen invasion is expected under northern conditions. In contrast, under the southern parameter set, $\mathcal{R}_0$ remains below one throughout the parameter range, so the same estimated value of $\mathcal{R}_d$ predicts that infection cannot be maintained despite demographic persistence of the tick population. In both regions, $\mathcal{R}_0$ approaches an asymptotic value as $\mathcal{R}_d$ increases. This saturation arises from the ratio-dependent host-finding success: as tick abundance increases relative to the available hosts, host-finding success becomes limited, so further increases in $\mathcal{R}_d$ produce smaller increases in transmission. Host demographic turnover also influences $\mathcal{R}_0$ by limiting opportunities for pathogen transmission. 

If we consider density-independent host-finding success (that is, taking $S_j(t)$ to be a constant value for all $t$), a different behavior is observed (see Figure~\ref{R0_rSK_constant} in Appendix~\ref{ap:BRN_constandsuccess}). In that case, $\mathcal{R}_0$ increases much higher and faster than in the ratio-dependent case, yielding higher estimates for $\mathcal{R}_0$ under both parameter regions.

Similarly, Figure~\ref{R0_c} shows the dependence of $\mathcal{R}_0$ on the questing parameter $c$, which determines the proportion of the questing season ticks spend above the leaf litter. As before, blue lines correspond to $\mathcal{R}_0$ evaluated at the stable fixed point and dashed black lines represent the threshold $\mathcal{R}_0=1$.

For small values of the questing parameter ($0<c<0.06$), the southern $\mathcal{R}_0$ exceeds the northern value, although both remain below one. Between approximately $0.06<c<0.52$, $\mathcal{R}_0$ is larger under the northern parameter set, with both curves crossing the line $\mathcal{R}_0=1$ in this region. For larger values of $c$, the southern $\mathcal{R}_0$ again exceeds the northern value. Comparing baseline parameter values ($c=0.18$ for the north and $c=0.02$ for the south), the model predicts $\mathcal{R}_0>1$ under northern conditions and $\mathcal{R}_0<1$ under southern conditions, indicating sustained transmission only under the northern baseline scenario. 

\subsection{Effects of Questing Behavior and Transmission Parameters on Nymphal Infection Metrics}\label{DIN_NIP}
Two commonly used metrics for assessing human Lyme disease risk are nymphal infection prevalence (NIP) and density of infected nymphs (DIN) \citep{logiudice2003ecology, lou2014impact}. NIP measures the proportion of questing nymphs that are infected with \textit{B. burgdorferi}, while DIN represents the density of infected nymphs in the environment. Although related, these quantities can produce contradicting results, highlighting the importance of evaluating them together when assessing disease risk \citep{peterson2026host}. To evaluate how questing behavior and host community composition affect these quantities, we compute NIP and DIN as functions of the questing parameter $c$. We also simulate infected host communities as functions of infection parameters. 

\begin{figure*}[ht!]
\centering
\begin{subfigure}[t]{0.5\textwidth}
    \centering
    \includegraphics[width=\linewidth]{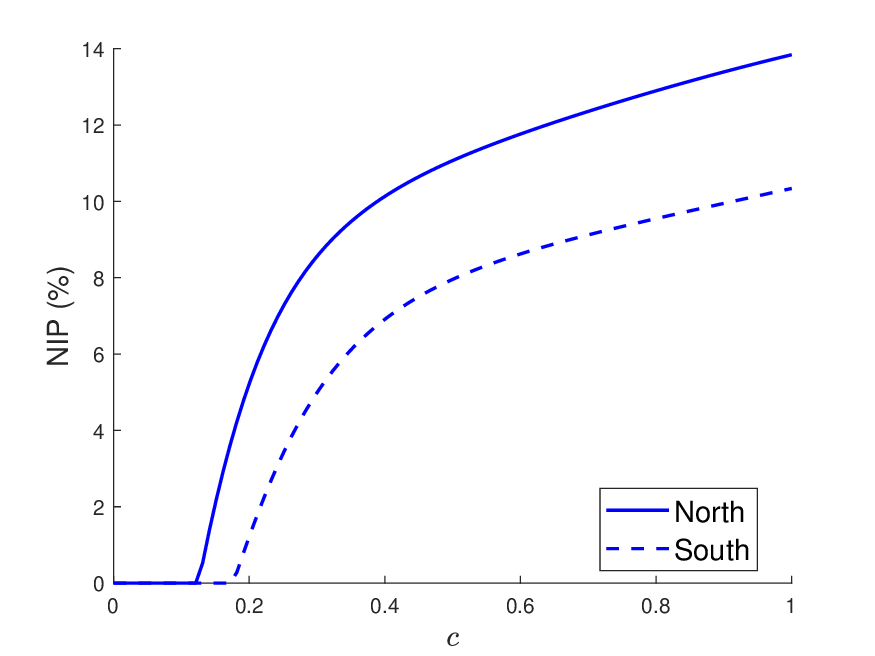}
    \caption{}
    \label{fig:4a}
\end{subfigure}%
\begin{subfigure}[t]{0.5\textwidth}
    \centering
    \includegraphics[width=\linewidth]{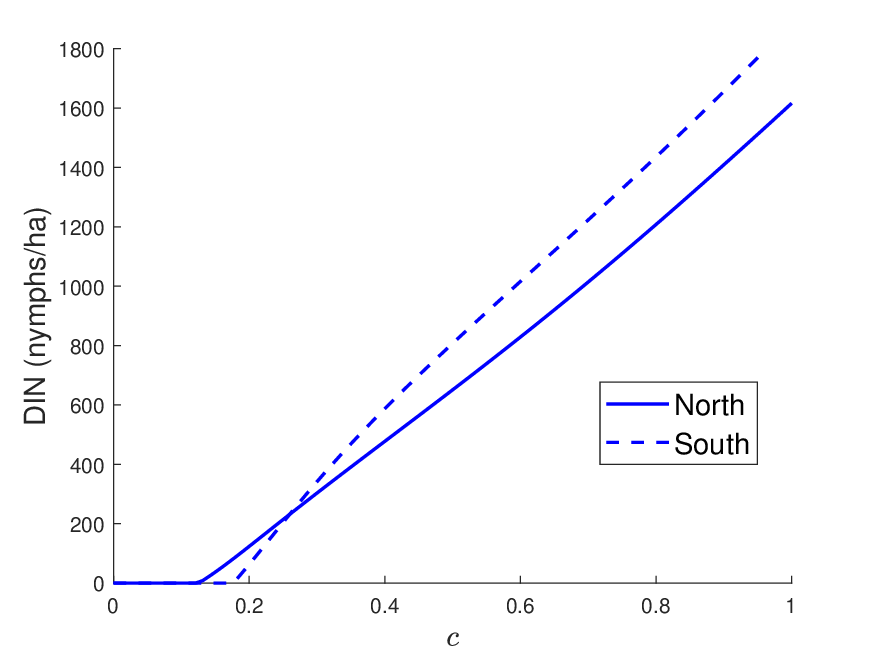}
    \caption{}
    \label{fig:4b}
\end{subfigure}%
\caption{Nymphal infection prevalence (NIP) and density of infected nymphs (DIN) as functions of the questing parameter $c$ under the northern (solid) and southern (dashed) parameter sets from Table~\ref{paramvaluesnorthsouth}.}
    \label{NIP_DIN_c}
\end{figure*}

\begin{figure*}[ht!]
\centering
\begin{subfigure}[t]{0.5\textwidth}
    \centering
    \includegraphics[width=\linewidth]{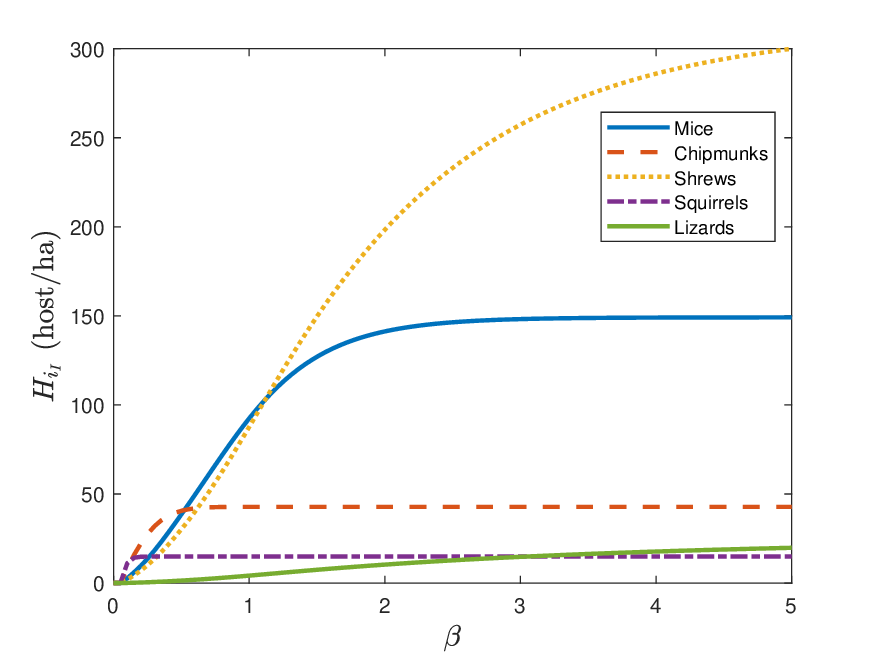}
    \caption{}
    \label{fig:5a}
\end{subfigure}%
\begin{subfigure}[t]{0.5\textwidth}
    \centering
    \includegraphics[width=\linewidth]{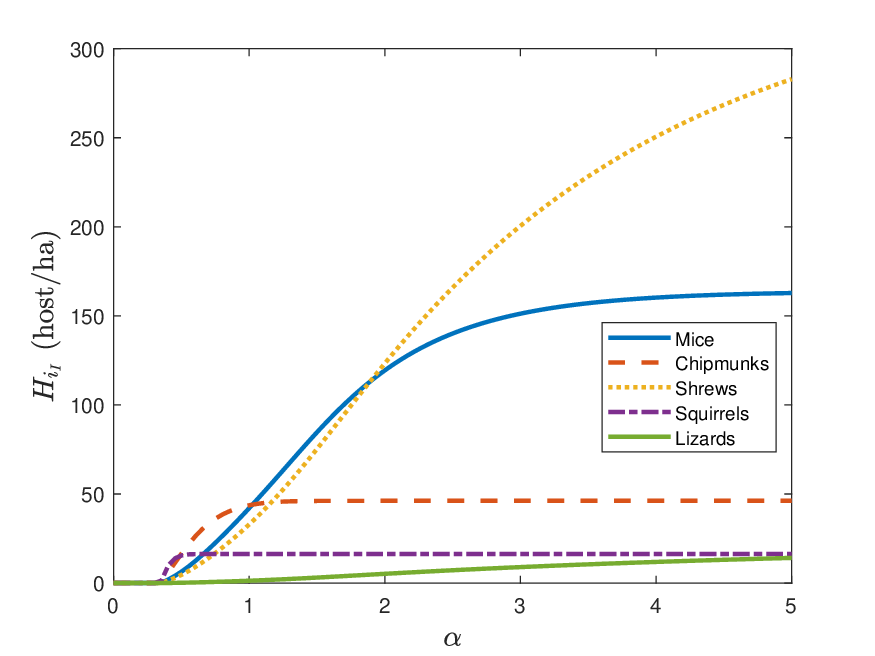}
    \caption{}
    \label{fig:5b}
\end{subfigure}%
    \caption{Densities of infected hosts ($H_{i_I}$ for $i \in \{1,2,3,4,5\}$) as the host-to-tick transmission probabilities $\beta_{i}$ (subfigure~\ref{fig:5a}) and the tick-to-host transmission probabilities $\alpha_i$ (subfigure~\ref{fig:5b}) are uniformly scaled by the parameters $\beta$ and $\alpha$ respectively under the northern parameter set from Table~\ref{paramvaluesnorthsouth}.}
    \label{HI_beta_alpha}
\end{figure*}

Figure~\ref{NIP_DIN_c} shows how the questing parameter $c$ influences NIP (Subfigure~\ref{fig:4a}) and DIN (Subfigure~\ref{fig:4b}) under both the northern  and southern parameter sets given in Table~\ref{paramvaluesnorthsouth}.

No positive endemic fixed point exists for $c<0.13$ under the northern parameter set or for $c<0.18$ under the southern parameter set. Over these intervals $\mathcal{R}_0<1$, so the disease-free fixed point is locally asymptotically stable, and both NIP and DIN are zero. For values of $c$ where an endemic fixed point exists, northern NIP exceeds southern NIP for all values of $c$ considered. For DIN, however, northern values exceed southern values for approximately $0.13<c<0.26$, after which southern DIN is larger. Thus, the two metrics do not always yield the same comparison between regions; for $0.26<c<1$, NIP is higher in the north, while DIN is higher in the south.

Figure~\ref{HI_beta_alpha} shows the densities of infected hosts as the host-to-larvae transmission probabilities $\beta_{i,L}$ and nymph-to-host transmission probabilities $\alpha_i$ are uniformly scaled by parameters $\beta$ and $\alpha$, respectively. In both cases, infected host densities increase initially, then level off at fixed values. This saturation reflects the influence of host mortality and turnover, which limit the duration of infection within the host population. As a result, even for large transmission probabilities, infected host densities remain bounded. 

This constraint also limits the impact of transmission dynamics on model behavior: for the parameter ranges considered here, the addition of disease dynamics does not produce oscillatory behavior or additional period-doubling bifurcations.

\section{Discussion and Conclusions}
\label{discussion}

In this work, we developed a new mathematical model for representing the dynamics of sylvatic host-vector transmission of {\it Borrelia burgdorferi}. We created a system of difference equations by layering a chronologically ordered sequence of demographic and epidemiological events involving three \textit{Ixodes scapularis} life stages and six host groups in order to capture the complex tick life cycle while preserving some level of analytical tractability. This system builds on our prior work, incorporating tick questing behavior as well as ratio-dependent host-finding success \citep{peterson2026model}. Our analysis shows how these factors impact transmission dynamics.

We first analyzed disease dynamics by deriving an expression for the basic reproductive number, $\mathcal{R}_0$. The resulting expression shows that transmission depends on host-finding success, captured by $S_{L_{\infty}}$ and $S_{N_{\infty}}$, both of which are strictly less than one when all populations are positive. These factors reduce the number of successful contacts and therefore limit transmission.

We then examined the role of host community composition and the demographic reproductive number $\mathcal{R}_d$, which represents the expected number of offspring produced by an individual over its lifetime. Across the range of $\mathcal{R}_d$ examined at the respective baseline questing parameters, $\mathcal{R}_0$ was larger under the northern parameter set than under the southern one, indicating that the probability of transmission is higher in the north than in the south with increased survival and reproductive success. In addition, $\mathcal{R}_0$ exceeded one for much of the $\mathcal{R}_d$ range in the north, implying that transmission can be sustained, except when $\mathcal{R}_d$ is extremely low and tick populations approach extinction. In contrast, $\mathcal{R}_0$ remained below one in southern systems across all values of $\mathcal{R}_d$ considered, suggesting that the pathogen cannot persist. These results are consistent with the observed geographic distribution of Lyme disease cases and infected ticks \citep{eisen2016county, foster2023prevalence, foster2024density,cdc_lyme_case_map}. Importantly, these qualitative differences between northern and southern systems emerged from parameter values obtained independently from the literature rather than from fitting or tuning the model to reproduce observed geographic patterns. 

Comparing these results with those obtained under density-independent success (Appendix~\ref{ap:BRN_constandsuccess}), the model predicts much larger values for $\mathcal{R}_0$ in both regions, including values above one in the south. In contrast, the ratio-dependent feeding formulation accounts for competition among questing ticks for available hosts, producing a saturating tick population and thus smaller $\mathcal{R}_0$ values. These results suggest that assuming density-independent host-finding success may overestimate transmission potential.

We also explored how questing behavior and host community composition influence transmission. When the questing parameter $c$ was varied over its full range, our simulations showed that southern host communities are capable of supporting values of $\mathcal{R}_0$ comparable to, and often exceeding, those obtained under northern parameter sets. For small values of $c$, corresponding to ticks spending most of the questing season within the leaf litter, southern parameter sets produce larger values of $\mathcal{R}_0$ reflecting the abundance of below-litter hosts that promote successful feeding. As $c$ increases, northern systems exhibit larger $\mathcal{R}_0$ values because a greater proportion of ticks encounter highly competent mammalian reservoirs, whereas a larger fraction of southern ticks continue feeding on incompetent reptiles. For larger values of $c$, however, the greater overall host abundance in the southern community produces more feeding opportunities and transmission events, resulting in larger values of $\mathcal{R}_0$ than under northern conditions. Comparing baseline scenarios ($c=0.18$ in the north and $c=0.02$ in the south), we see that under northern baseline conditions, $\mathcal{R}_0>1$, whereas southern baseline conditions yield $\mathcal{R}_0<1$. 

The same pattern is reflected in the simulation of NIP and DIN. Although southern parameter sets produce larger values of these quantities when questing behavior varies, the observed baseline questing behaviors show that both NIP and DIN are lower in the south. 

Taken together, these results show that, in our model, greater host diversity (like that in the southeastern U.S.) alone is not sufficient to suppress transmission under all questing behaviors. Instead, the combination of tick questing behavior and host community composition likely drives the geographic difference in Lyme disease prevalence. This result could have important consequences as ticks from northern regions (who tend to quest higher) continue to expand southward \cite{eisen2023changes, eisen2024evaluation}. These conclusions highlight the importance of considering vector behavior alongside host community composition when assessing transmission risk.

Finally, we examined how transmission parameters influence infected host densities (Figure~\ref{HI_beta_alpha}). Increasing transmission probabilities leads to an initial rise in infected host densities, followed by a saturation. This behavior is driven by host mortality and turnover, which limit how long individuals remain infected. As a result, even large increases in transmission efficiency do not produce unbounded growth in infection levels. This explains why the infection subsystem does not generate overshoot-and-compensate behaviors that lead to oscillations: for the parameter ranges considered, demographic turnover prevents the infection subsystem from exhibiting oscillatory dynamics.

Comparing our results to published field studies reveals both similarities and discrepancies. At the baseline questing behavior, the model reproduces the qualitative pattern of low nymphal infection prevalence and density of infected nymphs observed throughout the southeastern U.S., where reported values are often near zero \cite{foster2023prevalence, diuk2012human}. In contrast, predicted NIP and DIN values for northern ecosystems are substantially lower than those reported from the northeastern and upper midwestern U.S., where infection prevalence commonly ranges from 15--39\% \cite{logiudice2003ecology, hersh2014co, hanincova2006epidemic, huang2019high, conte2019effects, larson2017ecology}. These differences may reflect ecological variation not fully captured by the model, including assumptions about the timing of host infectiousness and demographic processes that influence opportunities for pathogen transmission. Thus, comparisons with field observations provide useful context for interpreting model predictions and identifying biological processes that may be important to incorporate in future work. 

While this study provides insight into how questing behavior and host structure influence disease dynamics, several limitations should be noted. The model relies on parameter values drawn from studies conducted across different locations and ecological contexts. Consequently, the results presented here should be interpreted with the model assumptions and parameterization in mind. In addition, changes in questing behavior likely have consequences on tick survival, which we did not account for in this study. Specifically, increased time spent above the leaf litter would likely elevate desiccation risk, particularly in southern regions \citep{ginsberg2017environmental,leal2020questing, eisen2016linkages}. As a simplifying assumption, we assumed that hosts infected at the beginning of the questing season do not become infectious until the end of the season, which likely underestimates NIP and DIN. Furthermore, the model assumes that different tick life stages do not quest together. In reality, there is likely overlap, particularly between larval and nymphal stages in the southern U.S., where longer questing seasons lead to simultaneous questing \citep{eisen2025seasonal}. This overlap may further contribute to reduced infection prevalence in the south, as larvae may be less likely to feed on hosts that were recently infected by nymphs.

An important extension of the current framework is to allow behavioral and demographic parameters to vary with environmental conditions rather than remain fixed. In particular, temperature, humidity, and soil conditions are known to influence questing behavior, survival, and host-seeking activity in \textit{I. scapularis} populations \citep{vail2002vertical,burtis2016impact,mcclure2019climate,cook2014effects}. Incorporating these climatic drivers into the model would provide a tool for investigating how changing environmental conditions may alter future Lyme disease risk.

Overall, these results demonstrate that both tick-host interactions and questing behavior strongly influence pathogen invasion. Ratio-dependent host-finding success produces saturating behavior in $\mathcal{R}_0$ and NIP over the parameter ranges examined. Differences in questing behavior further amplify these effects: increased time spent above the leaf litter increases transmission by redistributing host encounters towards more competent hosts. Notably, the model reproduces the observed geographic contrast in Lyme disease dynamics without being explicitly calibrated to that pattern. This qualitative agreement provides support for the plausibility of the framework and its ability to capture some of the ecological mechanisms underlying the geographic variation in Lyme disease risk. These results further demonstrate that differences in vector behavior and host community structure, rather than biodiversity alone, may help explain the pronounced difference in Lyme disease prevalence between the northeastern and southeastern United States.

\backmatter

\bmhead{Supplementary information}

Not applicable

\bmhead{Acknowledgements}

\section*{Declarations}

\bmhead{Funding}
This work was supported by the US National Science Foundation under Grant No. DMS-2230790.

\bmhead{Competing Interests}
The authors declare that they have no known competing financial interests or personal relationships that could have appeared to influence the work reported in this paper.

\bmhead{Ethics approval and consent to participate}
Not applicable.

\bmhead{Consent for publication}
Not applicable.

\bmhead{Data and Code availability}
The code used for numerical simulations and to generate the figures in this paper is available at \url{https://github.com/carlipeterson/jmb-lymedisease}.

\bmhead{Materials availability}
Not applicable.

\bmhead{Author Contributions}
Conceptualization, C.P., E.G., C.M.K; 
methodology, C.P., C.M.K.; 
software, C.P.; 
formal analysis, C.P., C.M.K.; 
investigation, C.P., C.M.K.; 
writing---original draft preparation, C.P.; 
writing---review and editing, C.P., C.M.K.; 
visualization, C.P.; 
supervision, E.G., C.M.K.; funding acquisition, C.M.K.

\begin{appendices}

\section{Timeline of Events}\label{ap:timeline}
The following list is a sequential timeline of the tick-host life cycle with infection dynamics split into seven time steps. 
\begin{itemize}
    \item[1.] Nymphs die (from climate-related factors ($s_N$) and from failure to find a host):
    \begin{align*}
        N_S\left(t+\frac{1}{7}\right)&=s_NN_S(t)\left(1-e^{-\frac{z_N}{L(t)}\sum_{i=1}^6 b_{i,N}\sigma_{i,N}H_i(t)}\right)\\
        N_I\left(t+\frac{1}{7}\right)&=s_NN_I\left(t\right)\left(1-e^{-\frac{z_N}{L(t)}\sum_{i=1}^6 b_{i,N}\sigma_{i,N}H_i(t)}\right)
    \end{align*}
    \item[2.] Nymphs feed and may become infected or may infect hosts:
    \begin{align*}
        H_{i_S}\left(t+\frac{2}{7}\right)&=H_{i_S}\left(t+\frac{1}{7}\right) e^{-\frac{p_{i,N}\left(t+\frac{1}{7}\right)\alpha_iN_I\left(t+\frac{1}{7}\right)}{H_i\left(t+\frac{1}{7}\right)}}\quad \text{for $i=1,2,3,4,5$}\\
        H_{i_I}\left(t+\frac{2}{7}\right)&=H_{i_I}\left(t+\frac{1}{7}\right)+H_{i_S}\left(t+\frac{1}{7}\right) \left(1-e^{-\frac{p_{i,N}\left(t+\frac{1}{7}\right)\alpha_iN_I\left(t+\frac{1}{7}\right)}{H_i\left(t+\frac{1}{7}\right)}}\right)\quad \text{for $i=1,2,3,4,5$}\\~\\
        N_S\left(t+\frac{2}{7}\right)&=N_S\left(t+\frac{1}{7}\right)\left(1-\sum_{i=1}^5 p_{i,N}\left(t+\frac{1}{7}\right) \beta_{i,N} \frac{H_{i_I}\left(t+\frac{1}{7}\right)}{H_i\left(t+\frac{1}{7}\right)}\right)\\
        N_I\left(t+\frac{2}{7}\right)&=N_I\left(t+\frac{1}{7}\right)+N_S\left(t+\frac{1}{7}\right)\sum_{i=1}^5 p_{i,N}\left(t+\frac{1}{7}\right) \beta_{i,N} \frac{H_{i_I}\left(t+\frac{1}{7}\right)}{H_i\left(t+\frac{1}{7}\right)}
    \end{align*}
    \item[3.] Half of the year's hosts are born for host groups 1--4, larvae hatch, and nymphs molt to adults:
    \begin{align*}
        H_{i_S}\left(t+\frac{3}{7}\right)&=H_{i_S}\left(t+\frac{2}{7}\right)+\frac{\Lambda_i}{2} \quad \text{for $i=1,2,3,4$}\\
         H_{5_S}\left(t+\frac{3}{7}\right)&=H_5\left(t+\frac{2}{7}\right)+\Lambda_5\\
        H_6\left(t+\frac{3}{7}\right)&=H_6\left(t+\frac{2}{7}\right)+\Lambda_6\\~\\
           L_S\left(t+\frac{3}{7}\right) &= \frac{r\frac{A\left(t-\frac{1}{7}\right)}{2}}{\frac{\frac{A\left(t-\frac{1}{7}\right)}{2}}{H_6\left(t-\frac{1}{7}\right)}+K}\\~\\
           A_S\left(t+\frac{3}{7}\right)&=N_S\left(t+\frac{2}{7}\right)\\
            A_I\left(t+\frac{3}{7}\right)&=N_I\left(t+\frac{2}{7}\right)\\~\\
            N_S\left(t+\frac{3}{7}\right)&=0\\
            N_I\left(t+\frac{3}{7}\right)&=0\\
    \end{align*}
    \item[4.] Larvae die (from climate-related factors ($s_L$) and from failure to find a host and feed), and half of the year's host deaths occur:
    \begin{align*}
         L_S\left(t+\frac{4}{7}\right)&=s_LL_S\left(t+\frac{3}{7}\right)\left(1-e^{-\frac{z_L}{B\left(t+\frac{3}{7}\right)}\sum_{i=1}^6 b_{i,L}H_i\left(t+\frac{3}{7}\right)}\right)\\~\\
        H_{i_S}\left(t+\frac{4}{7}\right) &= H_{i_S}\left(t+\frac{3}{7}\right)e^{-\frac{\mu_{i}}{2}} \quad \text{for $i=1,2,3,4,5$}\\
        H_{i_I}\left(t+\frac{4}{7}\right) &= H_{i_I}\left(t+\frac{3}{7}\right)e^{-\frac{\mu_{i}}{2}} \quad \text{for $i=1,2,3,4,5$}\\
        H_6\left(t+\frac{4}{7}\right)&= H_6\left(t+\frac{3}{7}\right)e^{-\frac{\mu_{6}}{2}}
    \end{align*}
    \item[5.] Larvae feed and potentially become infected, and adults die (from climate-related factors ($s_A$) and from failure to find a host and feed):
    \begin{align*}
        L_S\left(t+\frac{5}{7}\right)&=L_S\left(t+\frac{4}{7}\right)\left(1-\sum_{i=1}^5 p_{i,L}\left(t+\frac{4}{7}\right) \beta_{i,L} \frac{H_{i_I}\left(t+\frac{4}{7}\right)}{H_i\left(t+\frac{4}{7}\right)}\right)\\
        L_I\left(t+\frac{5}{7}\right)&=L_I\left(t+\frac{4}{7}\right)+L_S\left(t+\frac{4}{7}\right)\sum_{i=1}^5 p_{i,L}\left(t+\frac{4}{7}\right) \beta_{i,L} \frac{H_{i_I}\left(t+\frac{4}{7}\right)}{H_i\left(t+\frac{4}{7}\right)}\\~\\
        A_S\left(t+\frac{5}{7}\right)&=s_AA_S\left(t+\frac{4}{7}\right)\left(1-e^{-\frac{z_A}{N\left(t+\frac{4}{7}\right)} b_{6,A}\sigma_{6,A}H_6\left(t+\frac{4}{7}\right)}\right)\\
        A_I\left(t+\frac{5}{7}\right)&=s_AA_I\left(t+\frac{4}{7}\right)\left(1-e^{-\frac{z_A}{N\left(t+\frac{4}{7}\right)}b_{6,A}\sigma_{6,A}H_6\left(t+\frac{4}{7}\right)}\right)
    \end{align*}
    \item[6.] Larvae molt to nymphs, female adults feed and lay eggs, and the second half of the year's hosts are born for host groups 1--4:
    \begin{align*}
        N_S\left(t+\frac{6}{7}\right)&=L_S\left(t+\frac{5}{7}\right)\\
        N_I\left(t+\frac{6}{7}\right)&=L_I\left(t+\frac{5}{7}\right)\\~\\
        L_S\left(t+\frac{6}{7}\right)&=0\\
        L_I\left(t+\frac{6}{7}\right)&=0\\~\\
        H_{i_S}\left(t+\frac{6}{7}\right)&=H_{i_S}\left(t+\frac{5}{7}\right)+\frac{\Lambda_i}{2} \quad \text{for $i=1,2,3,4$}\\~\\
        A_S\left(t+\frac{6}{7}\right)&=A_S\left(t+\frac{5}{7}\right)\\
        A_I\left(t+\frac{6}{7}\right)&=A_I\left(t+\frac{5}{7}\right)
    \end{align*}
    \item[7.] All adults die (after reproducing), and the second half of the year's host deaths occur:
    \begin{align*}
        A_S\left(t+1\right)&=0\\
        A_I\left(t+1\right)&=0\\~\\
        H_{i_S}\left(t+1\right) &= H_{i_S}\left(t+\frac{6}{7}\right)e^{-\frac{\mu_{i}}{2}} \quad \text{for $i=1,2,3,4,5$}\\
        H_{i_I}\left(t+1\right) &= H_{i_I}\left(t+\frac{6}{7}\right)e^{-\frac{\mu_{i}}{2}}\quad \text{for $i=1,2,3,4,5$}\\
        H_6\left(t+1\right)&= H_6\left(t+\frac{6}{7}\right)e^{-\frac{\mu_{6}}{2}}
    \end{align*}
\end{itemize}

\newpage
\begin{sidewaystable}
\setlength{\tabcolsep}{3pt}
\renewcommand{\arraystretch}{1.1}
{\begin{tabular}{l|l|ccccccc}
\toprule
 & &
$t+\frac{1}{7}$ &
$t+\frac{2}{7}$ &
$t+\frac{3}{7}$ &
$t+\frac{4}{7}$ &
$t+\frac{5}{7}$ &
$t+\frac{6}{7}$ &
$t+1$ \\
\midrule

$L_S$ & Event
& 
&
& Larvae hatch
& Larvae die
& \cellcolor{gray!25}\begin{tabular}{c}
Larvae feed\\and get infected\\
(Census)
\end{tabular}
& $L_S\to N_S$
& \\

  & Equation
& $0$
& $0$
& $\frac{r\frac{A\bigg(t-\frac{4}{7}\bigg)}{2}}{\frac{\frac{A\bigg(t-\frac{4}{7}\bigg)}{2}}{H_6\bigg(t-\frac{4}{7}\bigg)}+K}$
& $s_L S_L\left(t+\frac{3}{7}\right)L_S\left(t+\frac{3}{7}\right)$
& \cellcolor{gray!25}\begin{tabular}{c}
$L_S\left(t+\frac{4}{7}\right) \cdot$\\
$\bigg(1-\sum_{i=1}^5p_{i,L}\bigg(t+\frac{4}{7}\bigg)\cdot$\\$\beta_{i,L}\frac{H_{i_I}\left(t+\frac{4}{7}\right)}{H_i\left(t+\frac{4}{7}\right)}\bigg)$
\end{tabular}
& $0$
& $0$ \\

\midrule

$L_I$ & Event
& 
&
& 
& Larvae die
& \cellcolor{gray!25}\begin{tabular}{c}
Larvae feed\\and get infected\\
(Census)
\end{tabular}
& $L_I\to N_I$
& \\

  & Equation
& $0$
& $0$
& $0$
& $0$
& \cellcolor{gray!25}\begin{tabular}{c}
$L_I\left(t+\frac{4}{7}\right)+$\\
$L_S\left(t+\frac{4}{7}\right) \cdot$\\
$\bigg(\sum_{i=1}^5p_{i,L}\bigg(t+\frac{4}{7}\bigg)\cdot$\\$\beta_{i,L}\frac{H_{i_I}\left(t+\frac{4}{7}\right)}{H_i\left(t+\frac{4}{7}\right)}\bigg)$
\end{tabular}
& $0$
& $0$ \\

\midrule

$N_S$ & Event
& Nymphs die
& \cellcolor{gray!25}\begin{tabular}{c}
Nymphs feed\\and get infected\\
(Census)
\end{tabular}
&
$N_S\to A_S$

&
&
& 
$L_S\to N_S$

& \\

  & Equation
& $s_NS_N(t)N_S(t)$
& \cellcolor{gray!25}\begin{tabular}{c}$N_S\left(t+\frac{1}{7}\right) \cdot$\\
$\bigg(1-\sum_{i=1}^5 p_{i,N}\bigg(t+\frac{1}{7}\bigg)\cdot$\\$\beta_{i,N}\frac{H_{i_I}\left(t+\frac{1}{7}\right)}{H_i\left(t+\frac{1}{7}\right)}\bigg)$
\end{tabular}
& $0$
& $0$
& $0$
& $L_S\left(t+\frac{5}{7}\right)$
& $N_S\left(t+\frac{6}{7}\right)$\\

\midrule

$N_I$ & Event
& Nymphs die
& \cellcolor{gray!25}\begin{tabular}{c}
Nymphs feed\\and get infected\\
(Census)
\end{tabular}
&
$N_I \to A_I$

&
&
& 
$L_I \to N_I$

& \\

  & Equation
& $s_NS_N(t)N_I(t)$
& \cellcolor{gray!25}\begin{tabular}{c}
$N_I\left(t+\frac{1}{7}\right)+$\\
$N_S\left(t+\frac{1}{7}\right)\cdot$\\
$\bigg(\sum_{i=1}^5 p_{i,N}\bigg(t+\frac{1}{7}\bigg)\cdot$\\$\beta_{i,N}\frac{H_{i_I}\left(t+\frac{1}{7}\right)}{H_i\left(t+\frac{1}{7}\right)}\bigg)$
\end{tabular}
& $0$
& $0$
& $0$
& $L_I\left(t+\frac{5}{7}\right)$
& $N_I\left(t+\frac{6}{7}\right)$\\

\midrule

$A_S$ & Event
&
&
& $N_S\to A_S$
&
& \cellcolor{gray!25}\begin{tabular}{c}
Adults die\\
(Census)
\end{tabular}
& \begin{tabular}{c}Female adults\\feed and lay eggs\end{tabular}

& \\

  & Equation
& $0$
& $0$
& $N_S\left(t+\frac{2}{7}\right)$
& $A_S\left(t+\frac{3}{7}\right)$
& \cellcolor{gray!25}$s_A S_A\left(t+\frac{4}{7}\right)A_S\left(t+\frac{4}{7}\right)$
& $A_S\left(t+\frac{5}{7}\right)$
& $0$
\\

\midrule

$A_I$ & Event
&
&
& $N_I\to A_I$
&
& \cellcolor{gray!25}\begin{tabular}{c}
Adults die\\
(Census)
\end{tabular}
& \begin{tabular}{c}Female adults feed\\and lay eggs\end{tabular}
& \\

  & Equation
& $0$
& $0$
& $N_I\left(t+\frac{2}{7}\right)$
& $A_I\left(t+\frac{3}{7}\right)$
& \cellcolor{gray!25}$s_A S_A\left(t+\frac{4}{7}\right)A_I\left(t+\frac{4}{7}\right)$
& $A_I\left(t+\frac{5}{7}\right)$
& $0$
\\

\bottomrule
\end{tabular}}
\caption{Visual representation of how the tick–host dynamics are divided into seven sequential events. Gray cells denote census times where the state variables are defined.}
\label{timeline}
\end{sidewaystable}

\newpage
\begin{sidewaystable}
\setlength{\tabcolsep}{3pt}
\renewcommand{\arraystretch}{1.1}
{\begin{tabular}{l|l|ccccccc}
\toprule
 & &
$t+\frac{1}{7}$ &
$t+\frac{2}{7}$ &
$t+\frac{3}{7}$ &
$t+\frac{4}{7}$ &
$t+\frac{5}{7}$ &
$t+\frac{6}{7}$ &
$t+1$ \\
\midrule
$H_{i_S}$ & Event
& 
& \cellcolor{gray!25}\begin{tabular}{c}
Nymphs feed\\and infect hosts\\
(Census)\\
\end{tabular}
& Hosts born
& Hosts die
& 
& Hosts born
& Hosts die\\

  & Equation
& $H_{i_S}(t)$
& \cellcolor{gray!25}$H_{i_S}\left(t+\frac{1}{7}\right)e^{-\frac{p_{i,N}\left(t+\frac{1}{7}\right)\alpha_iN_I\left(t+\frac{1}{7}\right)}{H_{i}\left(t+\frac{1}{7}\right)}}$
& $H_{i_S}\left(t+\frac{2}{7}\right)+\frac{\Lambda_i}{2}$
& $H_{i_S}\left(t+\frac{3}{7}\right)e^{-\frac{\mu_i}{2}}$
& $H_{i_S}\left(t+\frac{4}{7}\right)$
& $H_{i_S}\left(t+\frac{5}{7}\right)+\frac{\Lambda_i}{2}$
& $H_{i_S}\left(t+\frac{6}{7}\right)e^{-\frac{\mu_i}{2}}$
\\

\midrule

$H_{i_I}$ & Event
& 
& \cellcolor{gray!25}\begin{tabular}{c}
Nymphs feed\\and infect hosts\\
(Census)
\end{tabular}
& 
& Hosts die
& 
& 
& Hosts die\\

  & Equation
& $H_{i_I}(t)$
& \cellcolor{gray!25}\begin{tabular}{c}
$H_{i_I}\left(t+\frac{1}{7}\right)+$\\
$H_{i_S}\left(t+\frac{1}{7}\right)\left(1-e^{-\frac{p_{i,N}\left(t+\frac{1}{7}\right)\alpha_iN_I\left(t+\frac{1}{7}\right)}{H_{i}\left(t+\frac{1}{7}\right)}}\right)$
\end{tabular}
& $H_{i_I}\left(t+\frac{2}{7}\right)$
& $H_{i_I}\left(t+\frac{3}{7}\right)e^{-\frac{\mu_i}{2}}$
& $H_{i_I}\left(t+\frac{4}{7}\right)$
& $H_{i_I}\left(t+\frac{5}{7}\right)$
& $H_{i_I}\left(t+\frac{6}{7}\right)e^{-\frac{\mu_i}{2}}$
\\

\midrule

$H_{5_S}$ & Event
& 
& \cellcolor{gray!25}\begin{tabular}{c}
Nymphs feed\\and infect hosts\\
(Census)
\end{tabular}
& Hosts born
& Hosts die
& 
&
& Hosts die\\

  & Equation
& $H_{5_S}(t)$
& \cellcolor{gray!25}$H_{5_S}\left(t+\frac{1}{7}\right)e^{-\frac{p_{5,N}\left(t+\frac{1}{7}\right)\alpha_5N_I\left(t+\frac{1}{7}\right)}{H_{5}\left(t+\frac{1}{7}\right)}}$
& $H_{5_S}\left(t+\frac{2}{7}\right)+\Lambda_5$
& $H_{5_S}\left(t+\frac{3}{7}\right)e^{-\frac{\mu_5}{2}}$
& $H_{5_S}\left(t+\frac{4}{7}\right)$
& $H_{5_S}\left(t+\frac{5}{7}\right)$
& $H_{5_S}\left(t+\frac{6}{7}\right)e^{-\frac{\mu_5}{2}}$
\\

\midrule

$H_{5_I}$ & Event
& 
& \cellcolor{gray!25}\begin{tabular}{c}
Nymphs feed\\and infect hosts\\
(Census)
\end{tabular}
&
& Hosts die
& 
&
& Hosts die\\

  & Equation
& $H_{5_I}(t)$
& \cellcolor{gray!25}\begin{tabular}{c} $H_{5_I}\left(t+\frac{1}{7}\right)+$\\
$H_{5_S}\left(t+\frac{1}{7}\right)\left(1-e^{-\frac{p_{5,N}\left(t+\frac{1}{7}\right)\alpha_5N_I\left(t+\frac{1}{7}\right)}{H_{5}\left(t+\frac{1}{7}\right)}}\right)$
\end{tabular}
& $H_{5_I}\left(t+\frac{2}{7}\right)$
& $H_{5_I}\left(t+\frac{3}{7}\right)e^{-\frac{\mu_5}{2}}$
& $H_{5_I}\left(t+\frac{4}{7}\right)$
& $H_{5_I}\left(t+\frac{5}{7}\right)$
& $H_{5_I}\left(t+\frac{6}{7}\right)e^{-\frac{\mu_5}{2}}$
\\

\midrule

$H_6$ & Event
&
&
& Hosts born
& Hosts die
& \cellcolor{gray!25}(Census)
&
& Hosts die\\

  & Equation
& $H_6(t)$
& $H_6\left(t+\frac{1}{7}\right)$
& $H_6\left(t+\frac{2}{7}\right)+\Lambda_6$
& $H_6\left(t+\frac{3}{7}\right)e^{-\frac{\mu_6}{2}}$
& \cellcolor{gray!25}$H_6\left(t+\frac{4}{7}\right)$
& $H_6\left(t+\frac{5}{7}\right)$
& $H_6\left(t+\frac{6}{7}\right)e^{-\frac{\mu_6}{2}}$
\\
\bottomrule
\end{tabular}}
\caption*{Table~\ref{timeline} continued.}
\end{sidewaystable}

\section{Basic Reproductive Number Derivation}\label{ap:BRN}
To derive the basic reproductive number when populations approach a stable fixed point (denoted here by $\mathcal{R}_{0_1}$), we linearize the infection subsystem of Model \eqref{fullsystem} around the disease-free fixed point. The infection subsystem is given by
\begin{equation}
\begin{aligned}
    L_I\left(t+1\right)&=\frac{r\frac{A_{\infty}}{2}}{\frac{\frac{A_{\infty}}{2}}{H_{6_{\infty}}}+K}s_LS_{L_{\infty}}\left(\sum_{i=1}^5 p_{{i,L}_{\infty}} \beta_{i,L} \frac{H_{i_I}\left(t\right)}{H_{i_{\infty}}+\frac{\Lambda_i}{k_i}}\right)\\
    N_I\left(t+1\right)&=s_NS_{N_\infty}\left(L_I\left(t\right)+L_S\left(t\right)\left(\sum_{i=1}^5p_{{i,N}_\infty}\beta_{i,N}\frac{H_{i_I}\left(t\right)e^{-\mu_i}}{H_{i_\infty}}\right)\right)\\
    A_I\left(t+1\right)&=s_AS_{A_\infty}s_NS_{N_\infty}\left(L_I\left(t\right)+L_S\left(t\right)\left(\sum_{i=1}^5p_{{i,N}_\infty}\beta_{i,N}\frac{H_{i_I}\left(t\right)e^{-\mu_i}}{H_{i_\infty}}\right)\right)\\
    H_{i_I}\left(t+1\right)&=\left(H_{i_I}\left(t\right)+ \left(H_{i_S}(t)+\lambda_i\right)\left(1-e^{-\frac{p_{{i,N}_\infty}\alpha_{i}s_NS_{N_\infty}L_I\left(t\right)}{H_{i_\infty}}}\right)\right)e^{-\mu_i},
\end{aligned}
\end{equation}
where
\begin{align*}
    \lambda_i = \frac{\Lambda_i}{2}\left(1+e^{\frac{\mu_i}{2}}\right) \quad \text{for $i \in \{1,2,3,4\}$}, \quad
    \lambda_5 = \Lambda_5.
\end{align*}
At the disease-free fixed point, all hosts are susceptible and tick survival and contact parameters take their equilibrium values $S_{j_\infty}$ and $p_{{i,j}_{\infty}}$. 

Because the infected larvae and host equations do not depend on $N_I$ or $A_I$, these equations decouple, and we can analyze the larvae-host subsystem separately. 

The Jacobian matrix of this system evaluated at the disease-free fixed point is
\begin{align*}
    J = \begin{bmatrix}
    \vspace{0.2cm}
        0&R_{L1}&R_{L2}&R_{L3}&R_{L4}&R_{L5}\\
        \vspace{0.2cm}
        R_{1L}e^{-\mu_1}&e^{-\mu_1}&0&0&0&0\\
        \vspace{0.2cm}
        R_{2L}e^{-\mu_2}&0&e^{-\mu_2}&0&0&0\\
        \vspace{0.2cm}
        R_{3L}e^{-\mu_3}&0&0&e^{-\mu_3}&0&0\\
        \vspace{0.2cm}
        R_{4L}e^{-\mu_4}&0&0&0&e^{-\mu_4}&0\\
        \vspace{0.2cm}
        R_{5L}e^{-\mu_5}&0&0&0&0&e^{-\mu_5}
    \end{bmatrix}
\end{align*}
where
\begin{align*}
    R_{Li}&=\frac{r\frac{A_{\infty}}{2}}{\frac{\frac{A_{\infty}}{2}}{H_{6_{\infty}}}+K}s_LS_{L_\infty} \frac{p_{{i,L}_\infty}\beta_{i,L}}{H_{i_{\infty}}+\frac{\Lambda_i}{k_i}},\\
    R_{iL}&=p_{{i,N}_\infty}\alpha_is_NS_{N_\infty}\frac{\left(H_{i_\infty}+\lambda_i\right)}{H_{i_\infty}}.
\end{align*}
The quantity $R_{Li}R_{iL}$ represents one complete larva-host-larva transmission loop involving host group $i$. Summing across host groups therefore accounts for the expected number of infected larvae produced by a larva that survives through the tick life cycle.

Following the next-generation operator method for discrete-time systems \citep{allen2008basic}, we split the Jacobian into $F$ and $T$ matrices. Here, $F$ contains the terms associated with new infections. The resulting matrix is
\begin{align*}F=
    \begin{bmatrix}
    \vspace{0.2cm}
        0&R_{L1}&R_{L2}&R_{L3}&R_{L4}&R_{L5}\\
        \vspace{0.2cm}
        R_{1L}e^{-\mu_1}&0&0&0&0&0\\
        \vspace{0.2cm}
        R_{2L}e^{-\mu_2}&0&0&0&0&0\\
        \vspace{0.2cm}
        R_{3L}e^{-\mu_3}&0&0&0&0&0\\
        \vspace{0.2cm}
        R_{4L}e^{-\mu_4}&0&0&0&0&0\\
        \vspace{0.2cm}
        R_{5L}e^{-\mu_5}&0&0&0&0&0
    \end{bmatrix},
\end{align*}
where
\begin{align*}
    R_{Li}&=\frac{r\frac{A_{\infty}}{2}}{\frac{\frac{A_{\infty}}{2}}{H_{6_{\infty}}}+K}s_LS_{L_\infty} \frac{p_{{i,L}_\infty}\beta_{i,L}}{H_{i_{\infty}}+\frac{\Lambda_i}{k_i}},\\
    R_{iL}&=p_{{i,N}_\infty}\alpha_is_NS_{N_\infty}\frac{H_{i_\infty}+\lambda_i}{H_{i_\infty}}.
\end{align*}
The quantity $R_{Li}R_{iL}$ represents one complete larva-host-larva transmission loop involving host group $i$. Summing across host groups therefore accounts for the expected number of infected larvae produced by a larva that survives through the tick life cycle.

The transition matrix $T$ captures host turnover and is given by
\begin{align*}
    T=\begin{bmatrix}
        0&0&0&0&0&0\\
        0&e^{-\mu_1}&0&0&0&0\\
        0&0&e^{-\mu_2}&0&0&0\\
        0&0&0&e^{-\mu_3}&0&0\\
        0&0&0&0&e^{-\mu_4}&0\\
        0&0&0&0&0&e^{-\mu_5}
    \end{bmatrix}.
\end{align*}
The next generation matrix is therefore
\begin{align*}
    Q=F(I-T)^{-1},
\end{align*}
where $I$ is the $6\times 6$ identity matrix.

The basic reproductive number $\mathcal{R}_0$ is the spectral radius of $Q$:
\begin{align*}
    \mathcal{R}_0=\rho(Q).
\end{align*}

Solving for the dominant eigenvalue yields
\begin{align*}
    \mathcal{R}_0 &=  \sqrt{
        \sum_{i=1}^5 R_{Li}R_{iL}\frac{e^{-\mu_i}}{1-e^{-\mu_i}}}.
\end{align*}
Solving the inequality $\mathcal{R}_0<1$ yields
\begin{align*}
    \sum_{i=1}^5R_{iL}R_{Li}\frac{e^{-\mu_i}}{1-e^{-\mu_i}}<1,
\end{align*}
which can be written as 
\begin{align*}
    \sum_{i=1}^5R_iS_{L_\infty}S_{N_\infty}\frac{e^{-\mu_i}}{1-e^{-\mu_i}}<1,
\end{align*}
with $\displaystyle R_i=\frac{R_{iL}R_{Li}}{S_{L_\infty}S_{N_\infty}}$.

The magnitude of $\mathcal{R}_0$ therefore depends on two interacting mechanisms: the efficiency of the larva-host-larva transmission cycle and the probability that hosts survive long enough to participate in repeated transmission events.

The disease-free fixed point is LAS whenever this inequality is satisfied. In this case, the combined contribution of larva-host-larva transmission loops is insufficient to overcome losses in the tick life cycle.

\section{Existence of Endemic Fixed Point}\label{ap:ee}
We now state a necessary and sufficient condition for the existence of a unique endemic fixed point of Model~\eqref{fullsystem}.
\begin{theorem}
Model~\eqref{fullsystem} has a unique endemic fixed point if and only if the demographic survival fixed point is locally asymptotically stable (i.e., $\mathcal{R}_d>1$) and $\mathcal{R}_0>1$.
\end{theorem}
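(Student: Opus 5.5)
The plan is to reduce the fixed-point problem to one scalar equation in $L_I$ and study it with a monotonicity/concavity argument. First, the demographic part. Summing susceptible and infected classes gives the demographic model of \cite{peterson2026model}. An endemic fixed point needs positive tick and host totals, and the only such demographic state (the synchronous 2-cycle is not a fixed point) is the positive existence fixed point. By the results recalled in the Baseline Demographic Dynamics section, this point exists, and is locally asymptotically stable, exactly when $\mathcal{R}_d>1$. So $\mathcal{R}_d>1$ is necessary, and under it we may freeze $A_\infty, H_{i_\infty}, S_{j_\infty}, p_{i,j_\infty}$ as constants and work with system \eqref{BRNsystem}.

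Second, I eliminate variables at a fixed point. From the $H_{i_I}$ equation with $H_{i_S}=H_{i_\infty}-H_{i_I}$, write $x_i=p_{i,N_\infty}\alpha_i s_N S_{N_\infty}L_I/H_{i_\infty}$. The equation $H_{i_I}=\bigl(H_{i_I}+(H_{i_\infty}-H_{i_I}+\lambda_i)(1-e^{-x_i})\bigr)e^{-\mu_i}$ is linear in $H_{i_I}$. Solving it gives
\[H_{i_I}=\frac{(H_{i_\infty}+\lambda_i)(1-e^{-x_i})e^{-\mu_i}}{1-e^{-x_i}e^{-\mu_i}}.\]
(The identity $H_{i_\infty}+\lambda_i$ versus $H_{i_\infty}$ should be checked against the host demographic equilibrium; I will follow the paper's bookkeeping.) Substituting into the $L_I$ equation gives the scalar equation $L_I=G(L_I)$ displayed in the Existence of an Endemic Fixed Point section. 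The remaining variables $N_I, A_I, L_S, N_S, A_S$ are then determined explicitly and positively by $L_I$. So endemic fixed points correspond bijectively to positive roots of $G(L)=L$. I also need to check that $0\le H_{i_I}\le H_{i_\infty}$ and that the infected fraction of larvae is at most $1$, so the solution is biologically feasible.

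Third, I analyze $G$. Each summand has the form $c_i\,\phi(x_i)$ with $x_i$ proportional to $L$ and
\[\phi(x)=\frac{(1-e^{-x})q}{1-e^{-x}q},\qquad q=e^{-\mu_i}\in(0,1).\]
I will show $\phi(0)=0$, $\phi'(0)=q/(1-q)$, $\phi$ increasing, $\phi$ strictly concave, and $\phi(x)\to q<\infty$ as $x\to\infty$. Concavity should follow from writing $u=1-e^{-x}$, which is concave and increasing, and composing it with $u\mapsto qu/(1-q+qu)$, which is increasing and concave. Hence $G$ is strictly concave and bounded, with $G(0)=0$. Computing $G'(0)=\sum_i R_{Li}R_{iL}\,e^{-\mu_i}/(1-e^{-\mu_i})$ recovers exactly $\mathcal{R}_0^2$ from Appendix~\ref{ap:BRN}. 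A strictly concave, bounded $G$ with $G(0)=0$ has a positive fixed point iff $G'(0)>1$, and that fixed point is then unique. This is equivalent to $\mathcal{R}_0>1$, which gives both directions. The same picture, with $G'(L^*)<1$ at the crossing, is what I would later use for the stability remark.

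The main obstacle I expect is the bookkeeping in the second step. The event ordering makes the coefficient $(H_{i_\infty}+\lambda_i)$ versus $H_{i_\infty}+\Lambda_i/k_i$ delicate. The derivative at zero must match $R_{Li}R_{iL}$ exactly, including the factor $e^{-\mu_i}$ that appears in $F$ rather than $T$. I must also confirm feasibility: that at the root, $\sum_i p_{i,L}\beta_{i,L}H_{i_I}/(H_i+\Lambda_i/k_i)\le1$ and $H_{i_I}\le H_{i_\infty}$. I would handle this using $\phi\le q$ and $\beta_{i,L}\le1$, $\sum p_{i,L}\le1$. I would also need to rule out that $H_{i_\infty}+\lambda_i$ exceeds the pool of hosts actually available.
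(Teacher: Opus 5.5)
Your proposal is correct and follows essentially the same route as the paper. Both reduce to the scalar equation $L_I=f(L_I)$ and use $f(0)=0$, monotonicity, concavity, $f'(0)=\mathcal{R}_0^2$, and the fact that $f$ eventually lies below the diagonal (you get this from boundedness of $f$, where the paper uses $f'(L_I)\to 0$); your additional steps fill in points the paper leaves implicit, namely the argument that $\mathcal{R}_d>1$ is necessary because an endemic state must sit on the positive demographic fixed point, and the feasibility check, which closes at once since $\phi\le e^{-\mu_i}$ and $(H_{i_\infty}+\lambda_i)e^{-\mu_i}=H_{i_\infty}$ give $H_{i_I}\le H_{i_\infty}$.
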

\begin{proof}
We first reduce Model~\eqref{fullsystem} to a single equation for the infected larval population. Assuming all other populations are at equilibrium, we express $H_{i_I}$ as functions of $L_I$ and substitute into the $L_I$ equation to obtain
\begin{align*}
    L_I=f(L_{I}),
\end{align*}
where 
\begin{align*}
    f(L_I)=\frac{r\frac{A_{\infty}}{2}}{\frac{\frac{A_{\infty}}{2}}{H_{6_{\infty}}}+K}s_LS_{L_\infty}\left(\sum_{i=1}^5 p_{{i,L}_{\infty}} \beta_{i,L} \frac{H_{i_{\infty}}+\lambda_i}{H_{i_\infty}+\frac{\Lambda_i}{k_i}}\left(\frac{\left(1-e^{-\frac{p_{{i,N}_\infty}\alpha_{i}s_NS_{N_\infty}L_{I}}{H_{i_\infty}}}\right)e^{-\mu_i}}{1-e^{-\frac{p_{{i,N}_\infty}\alpha_{i}s_NS_{N_\infty}L_{I}}{H_{i_\infty}}}e^{-\mu_i}}\right)\right).
\end{align*}
The function $f: \mathbb{R}_{\ge0}\to\mathbb{R}$ is continuous and satisfies $f(0)=0$. 

Differentiation shows that $f'(L_I)>0$ and $f''(L_I)<0$ for all $L_I \geq 0$, implying that $f$ is strictly increasing and concave down.

Evaluating the derivative at the origin yields 
\begin{align*}
f'(0)=\frac{r\frac{A_{\infty}}{2}}{\frac{\frac{A_{\infty}}{2}}{H_{6_{\infty}}}+K}s_LS_{L_\infty}s_NS_{N_\infty}\left(\sum_{i=1}^5 p_{{i,L}_{\infty}}p_{{i,N}_{\infty}}\alpha_i \beta_{i,L}\frac{H_{i_\infty}+\lambda_i}{H_{i_\infty}+\frac{\Lambda_i}{k_i}}\frac{e^{-\mu_i}}{1-e^{-\mu_i}}\right)=\mathcal{R}_0^2
\end{align*}
Moreover, direct computation shows that 
\begin{align*}
    \lim_{L_I \to \infty} f'(L_I)=0.
\end{align*}
Thus, there exists $x>0$ such that $f'(L_I)<\frac{1}{2}$ for all $L_I \geq x$.
By the Mean Value Theorem, for $L_I\geq x$ we have
\begin{align*}
    f(L_I) \leq f(x)+\frac{1}{2}(L_I-x).
\end{align*}
Thus, for sufficiently large $L_I$,  
\begin{align*}
    f(L_I)<L_I,
\end{align*}
so the graph of $f$ eventually lies below the line $f(L_I)=L_I$.

Consequently, since $f$ is increasing, concave down, and satisfies $f(0)=0$, a unique positive endemic fixed point exists if and only if $f'(0)>1$, i.e.,
\begin{align*}
    \mathcal{R}_0^2>1.
\end{align*}
This analysis also tells us that the endemic fixed point is LAS when it exists since concavity implies $0<f'(L_{I_\infty})<1$ where $L_{I_\infty}>0$.
\end{proof}

\section{Basic Reproductive Number Under Density-Independent Host-Finding Success}\label{ap:BRN_constandsuccess}
To assess the impact of ratio-dependent host-finding success in our model, we compare the resulting $\mathcal{R}_0$ value with a model that assumes density-independent host-finding success (Figure~\ref{R0_rSK_constant}). Here, we take $s_LS_L(t)=0.1$, $s_NS_N(t)=0.1$, and $s_AS_A(t)=0.3$ for all $t$ \citep{daniels2000estimating, ostfeld2023mouse}. This assumption leads to the same expression for $\mathcal{R}_0$ discussed in Section~\ref{BRN_FP} and Appendix~\ref{ap:BRN}, with $S_{L_\infty}$ and $S_{N_\infty}$ replaced with their constant values. 
\begin{figure*}[ht!]
\centering
\includegraphics[width=\linewidth]{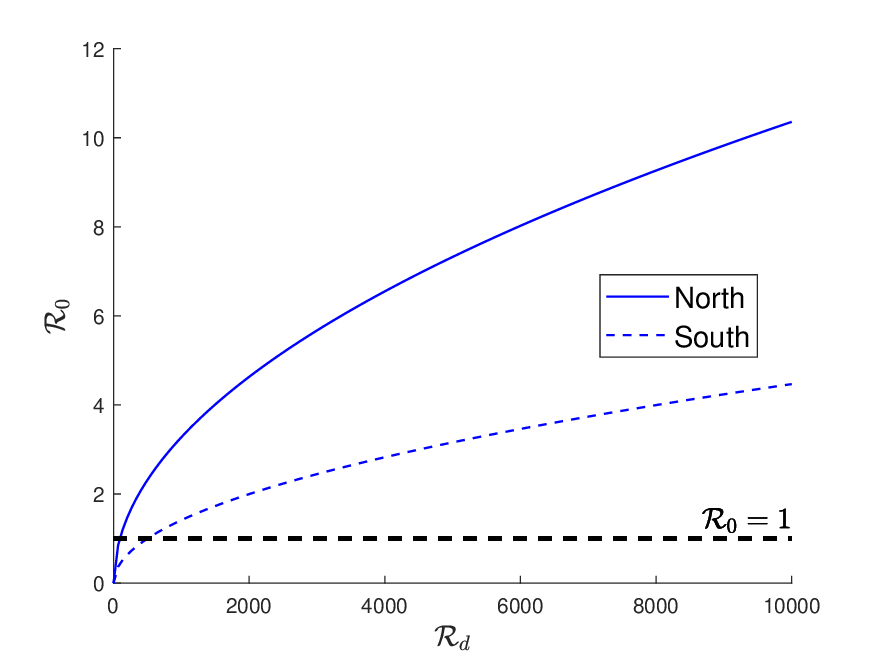}
    \caption{Plots of $\mathcal{R}_0$ as a function of the demographic reproductive number $\mathcal{R}_d=\frac{rS}{K}$ under the northern (solid) and southern (dashed) parameter sets from Table~\ref{paramvaluesnorthsouth}, respectively, under the assumption of density-independent host-finding success. In the figure, blue lines indicate $\mathcal{R}_0$ values evaluated at the coexistence fixed point.}
    \label{R0_rSK_constant}
\end{figure*}

\end{appendices}


\bibliography{sn-bibliography}

\end{document}